\newtheorem{theorem}{Theorem}[section]
\newtheorem{lemma}[theorem]{Lemma}
\pgfplotsset{width=10cm,compat=1.9}
\def\BibTeX{{\rm B\kern-.05em{\sc i\kern-.025em b}\kern-.08em
		T\kern-.1667em\lower.7ex\hbox{E}\kern-.125emX}}
\begin{document}
	
	\title{Dynamic Graph Configuration with Reinforcement Learning for Connected Autonomous Vehicle Trajectories}

	\author{
		\IEEEauthorblockN{Udesh Gunarathna, Hairuo Xie, Egemen Tanin,  Shanika Karunasekara, Renata Borovica-Gajic}
		\IEEEauthorblockA{pgunarathna@student.unimelb.edu.au,  \{xieh, etanin, karus, renata.borovica\}@unimelb.edu.au}
		\IEEEauthorblockA{\textit{School of Computing and Information Systems} \\
			\textit{University of Melbourne}}}
	
	\maketitle
	
	\begin{abstract}
		Traditional traffic optimization solutions assume that the graph structure of road networks is static, missing opportunities for further traffic flow optimization. We are interested in optimizing traffic flows as a new type of graph-based problem, where the graph structure of a road network can adapt to traffic conditions in real time. In particular, we focus on the dynamic configuration of traffic-lane directions, which can help balance the usage of traffic lanes in opposite directions. The rise of connected autonomous vehicles offers an opportunity to apply this type of dynamic traffic optimization at a large scale. The existing techniques for optimizing lane-directions are however not suitable for dynamic traffic environments due to their high computational complexity and the static nature.
		
		In this paper, we propose an efficient traffic optimization solution, called Coordinated Learning-based Lane Allocation (CLLA), which is suitable for dynamic configuration of lane-directions. CLLA consists of a two-layer multi-agent architecture, where the bottom-layer agents use a machine learning technique to find a suitable configuration of lane-directions around individual road intersections. The lane-direction changes proposed by the learning agents are then coordinated at a higher level to reduce the negative impact of the changes on other parts of the road network. Our experimental results show that CLLA can reduce the average travel time significantly in congested road networks. We believe our method is general enough to be applied to other types of networks as well.
		
	\end{abstract}
	
	\begin{IEEEkeywords}
		Graphs, Spatial Database, Reinforcement Learning
	\end{IEEEkeywords}
	
	\section{Introduction}
	The goal of traffic optimization is to improve traffic flows in road networks. Traditional solutions normally assume that the structure of road networks is static regardless of how traffic can change at real time~\cite{ford2009maximal,flowOverTime}. A less-common way to optimize traffic is by performing limited changes to road networks which when is-use are deployed in very small scale. We focus on dynamic lane-direction changes, which can help balance the usage of traffic lanes in many circumstances, e.g. as soon as when the traffic lanes in one direction become congested while the traffic lanes in the opposite direction are underused~\cite{Lane,reversiblelane}. Unfortunately, the existing techniques for optimizing lane-directions are not suitable for dynamic traffic environment at large scale due to their high computational complexity~\cite{Wu2009,dynamicLane_auto,dynamic_biLevel}. We develop an efficient solution for optimizing lane-directions in highly dynamic traffic environments. Our solution is based on an algorithm that modifies the property of a road network graph for improving traffic flow in the corresponding road network, introducing a new graph problem. 
	
	\begin{figure}[htbp]
		\centering
		\begin{subfigure}[b]{.16\textwidth}
			\includegraphics[width=\linewidth, height=3cm]{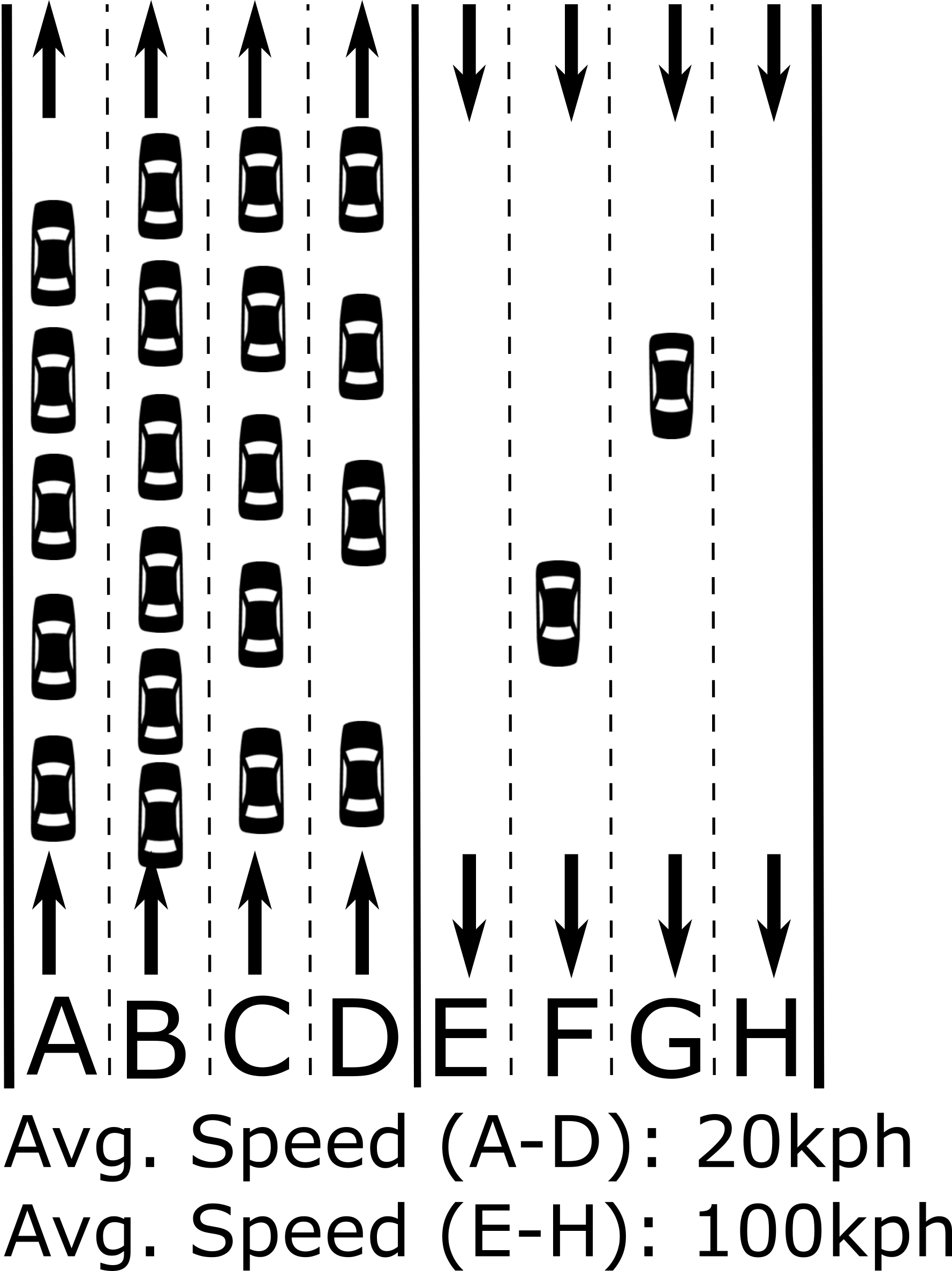}
			\caption{Traffic before lane-direction change}
			\label{fig-expBfChg}
		\end{subfigure}\hspace{0.05\textwidth}
		\begin{subfigure}[b]{.15\textwidth}
			\includegraphics[width=\linewidth, height=3cm]{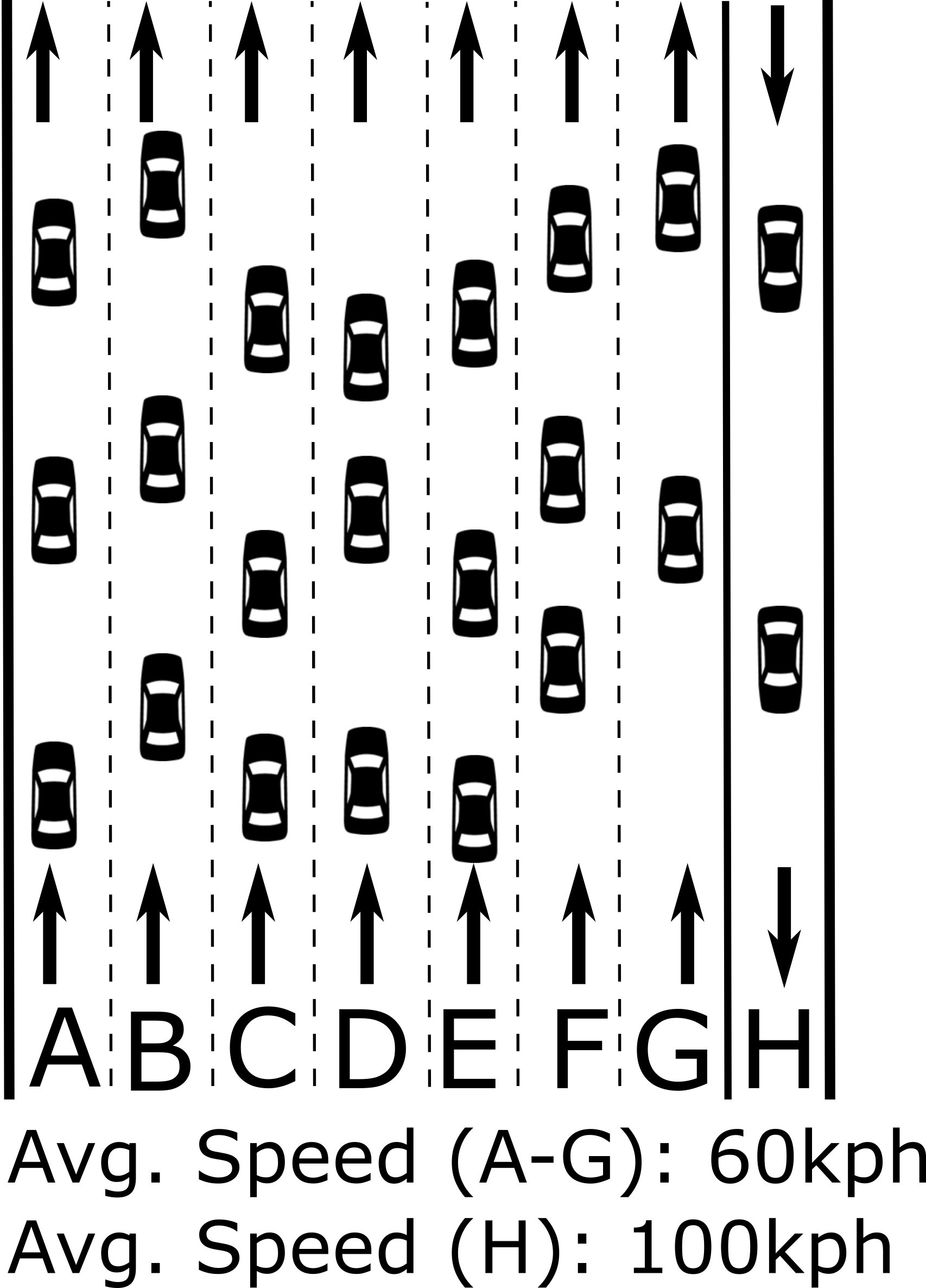}
			\caption{Traffic after lane-direction change}
			\label{fig-expAfChg}
		\end{subfigure}
		\caption{The impact of lane-direction change on traffic flow. There are 20 vehicles moving in the north-bound direction and 2 vehicles moving in the south-bound direction.}
		\label{fig-expDirChg}
		\vspace{-3mm}
	\end{figure}
	
	The impact of dynamic lane-direction configurations can be shown in the following example, where 20 vehicles are moving north-bound and 2 vehicles are moving south-bound (Figure~\ref{fig-expDirChg}) at a certain time. In Figure~\ref{fig-expBfChg}, there are 4 north-bound lanes and 4 south-bound lanes. Due to the large number of vehicles and the limited number of lanes, the north-bound traffic is highly congested. At the same time, the south-bound vehicles are moving at a high speed as the south-bound lanes are almost empty. Figure~\ref{fig-expAfChg} shows the dramatic change of traffic flow after lane-direction changes are applied when congestion is observed, where the direction of E, F and G is reversed. The north-bound vehicles are distributed into the additional lanes, resulting in a higher average speed of the vehicles. At the same time, the number of south-bound lanes is reduced to 1. Due to the low number of south-bound vehicles, the average speed of south-bound traffic is not affected. The lane-direction change helps improve the overall traffic efficiency in this case. This observation was used by traffic engineers of certain road segments for many years and applied in a more static way. We aim to scale this to extreme levels in time and space. The benefit of dynamic lane-direction changes can also be observed in preliminary tests, where we compare the average travel time of vehicles in two scenarios, one uses dynamic lane-direction configurations, another uses static lane-direction configurations. The dynamic lane-direction configurations are computed with a straightforward solution (Section~\ref{demand based}). The result shows that lane-direction changes reduce travel times by  14\% on average when the traffic increases (see Figure~\ref{lane change plot}). In traffic engineering terms this is a dramatic reduction.  
	
	\pgfplotstableread[col sep = comma]{demandbased.csv}\loadedtable
	
	\begin{figure}
		\begin{tikzpicture}
		\begin{axis}[
		height=5cm, width=7.5cm,
		legend entries = {Dynamic configurations, Static configurations},
		legend cell align={left},
		legend style={at={(0,1)},anchor=north west},
		ylabel={Average travel time (min)},
		xlabel={The number of vehicles generated},
		]
		\addplot table
		[x=a, y=c, col sep=comma] {\loadedtable};
		\addplot table
		[x=a, y=b, col sep=comma] {\loadedtable};
		
		\end{axis}
		\end{tikzpicture}
		\caption{The average travel time of vehicles when using static and dynamic lane-direction configurations.}
		\label{lane change plot}
		\vspace{-4mm}
	\end{figure}
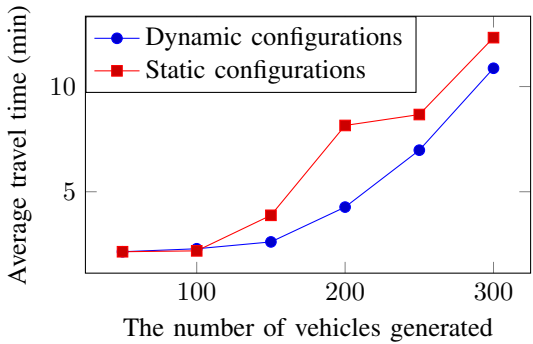

	Despite their potential benefit, dynamic lane-direction changes cannot be easily applied to existing traffic systems as they require additional signage and safety features~\cite{hntb10}. The emergence of connected autonomous vehicles (CAVs)~\cite{Narla2013} however can make dynamic lane-direction changes a common practice in the future. Our previous work shows that CAVs have the potential to enable innovative traffic management solutions~\cite{Ramamohanarao}. Compared to human-driven vehicles, CAVs are more capable of responding to a given command in a timely manner~\cite{dynamicLane_auto}. CAVs can also provide detailed traffic telemetry data to a central traffic management system in real time. This helps the system to adapt to dynamic traffic conditions. 
	
	We formulate lane allocation based on real-time traffic as a new graph problem with the aim to find a new graph $G^\prime_t$ from a road network ($G$) (i.e., dynamically optimize the graph) such that total travel time of all vehicles in the road network is minimized.
	In order to optimize the flow of the whole network, all the traffic lanes in the network must be considered. In many circumstances, one cannot simply allocate more traffic lanes at a road segment for a specific direction when there is more traffic demand in the direction. This is because a lane-direction change at a road segment can affect not only the flow in both directions at the road segment but also the flow at other road segments. Due to the complexity of the problem, the computation time can be very high with the existing approaches as they aim to find the optimal configurations based on linear programming~\cite{Wu2009,dynamicLane_auto,dynamic_biLevel}, and hence are not suitable for frequent recomputation over large networks.
	
	To address the above mentioned issues, we propose a light-weight and effective framework, called a Coordinated Learning-based Lane Allocation (CLLA) framework, for optimizing lane-directions in dynamic traffic environments. 
	The CLLA approach finds the configurations that effectively improve the traffic efficiency of the whole network, while keeping the computation cost of the solution low. The key idea is that traffic optimization can be decoupled into two processes: i) a local process that proposes lane-direction changes based on local traffic conditions around road intersections, and b) a global process that evaluates the proposed lane-direction changes based on their large scale impact. 
	
	\begin{figure}
		\centering
		\includegraphics[width=.5\linewidth, height=3cm]{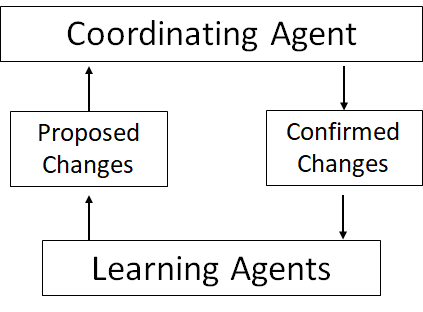}
		\caption{The hierarchical architecture of our traffic management solution based on lane-direction changes.}
		\label{intro_fig}
		\vspace{-4mm}
	\end{figure}
	
	The architecture of our hierarchical solution is illustrated in Figure~\ref{intro_fig}. The bottom layer consists of a set of autonomous agents that operate at the intersection level. An agent finds suitable lane-direction changes for the road segments that connect to a specific intersection. The agent uses reinforcement learning~\cite{sutton}, which helps determine the best changes based on multiple dynamic factors. The agents send the proposed lane-direction changes to the upper layer, which consists of a coordinating agent. The coordinating agent maintains a data structure, named \emph{Path Dependency Graph (PDG)}, which is built based on the trip information of connected autonomous vehicles. With the help of the data structure, the coordinating agent evaluates the global impact of the proposed lane-direction changes and decides what changes should be made to the traffic lanes. The decision is sent back to the bottom layer agents, which will make the changes accordingly. 
	
	The main contributions of our work are as follows:
	\begin{itemize}
		\item We formulate dynamic lane allocation as a new graph problem (Dynamic Resource Allocation problem).
		
		\item We propose a hierarchical multi-agent solution (called CLLA) for efficient dynamic optimization of lane-directions that uses reinforcement learning to capture dynamic changes in the traffic. 
		
		\item We introduce an algorithm and innovative data structure (called \emph{path dependency graph}) for coordinating lane-direction changes at the global level. 
		
		\item Extensive experimental evaluation shows that CLLA significantly outperforms other traffic management solutions, making it a viable tool for mitigating traffic congestion for future traffic networks.
	\end{itemize}
	
	\section{Related Work}
	\subsection{Traffic Optimization Algorithms} 
	\label{sec-relatedOptimizationAlgorithms}
	Existing traffic optimization algorithms are commonly based on traffic flow optimization with linear programming~\cite{Ford,flowOverTime,K2009}. The algorithms are suitable for the situations where traffic demand and congestion levels are relatively static. When there is a significant change in the network, the optimal solutions normally need to be re-computed from scratch. Due to the high computational complexity of finding an optimal solution, the algorithms may not be suitable for highly dynamic traffic environments. 
	
	With the rise of reinforcement learning~\cite{Ravishankar2017}, a new generation of traffic optimization algorithms have emerged~\cite{Walraven2016, Yau2017, Mannion}. In reinforcement learning, a learning agent can find the rules to achieve an objective by repeatedly interacting with an environment. The interactive process can be modelled as a finite Markov Decision Process, which requires a set of states $S$ and a set of actions $A$ per state. Given a state $s$ of the environment, the agent takes an action $a$. As the result of the action, the environment state may change to $s^\prime$ with a reward $r$. The agent then decides on the next action in order to maximize the reward in the next round. Reinforcement learning-based approaches can suggest the best actions for traffic optimization given a combination of network states, such as the queue size at intersections~\cite{Aslani2018, Arel2010}. They have an advantage over linear programming-based approaches, since if trained well, they can optimize traffic in a highly dynamic network. In other words, there is no need to re-train the agent when there is a change in the network. For example, Arel et al. show that a multi-agent system can optimize the timing of adaptive traffic lights based on reinforcement learning~\cite{Arel2010}. Different to the existing approaches, our solution uses reinforcement learning for optimizing lane-directions.
	
	A common problem with reinforcement learning is that the state space can grow exponentially when the dimensionality of the state space grows linearly. For example, let us assume that the initial state space only has one dimension, the queue size at intersections. If we add two dimensions to the state space, traffic signal phase and traffic lane configuration, there will be three dimensions and the state space is four times as large as the original state space. The fast growth of the state space can make reinforcement learning unsuitable for real deployments. This problem is known as the \emph{curse of dimensionality}~\cite{Multiagent_palnning}. A common way to mitigate the problem is by using a function approximator such as a neural network. Such techniques have been mainly used for dynamic traffic signal control~\cite{Genders, Pol2019}, while we extend the use of the technique to dynamic lane-direction configurations.
	
	Many existing traffic optimization solutions use model-based reinforcement learning, where one needs to know the exact probability that a specific state transits to another specific state as a result of a specific action~\cite{Wiering,Steingr2005reinforcementlearning}. Nonetheless, such an assumption is unrealistic since the full knowledge of state transition probabilities can hardly be known for highly complex traffic systems. Different to model-based approaches, our optimization solution employs a model-free algorithm, Q-learning~\cite{q-learning}, which does not require such  knowledge and hence is much more applicable to real traffic systems. 
	
	Coordination of multi-agent reinforcement learning can be achieved through a joint state space or through a coordination graph \cite{CoG}. Such techniques however require agents to be trained on the targeted network. Since our approach uses an implicit mechanism to coordinate, once an agent is trained, it can be used in any road network. 
	
	\subsection{Lane-direction Configurations}
	
	Research shows that dynamic lane-direction changes can be an effective way to improve traffic efficiency~\cite{Lane}. However, existing approaches for optimizing lane-directions are based on linear programming~\cite{dynamicLane_auto,Zhang2007, Wu2009,dynamic_biLevel}, which are unsuitable for dynamic traffic environments dues to their high computational complexity. For example, Chu et al. use linear programming to make lane-allocation plans by considering the schedule of connected autonomous vehicles~\cite{dynamicLane_auto}. Their experiments show that the total travel time can be reduced. However, the computational time grows exponentially when the number of vehicles grows linearly, which can make the approach unsuitable for highly dynamic traffic environments. Other approaches perform optimization based on two processes that interact with each other~\cite{Zhang2007, Wu2009,dynamic_biLevel}. One process is for minimizing the total system cost by reversing lane directions while the other process is for making route decisions for individual vehicles such that all the vehicles can minimize their travel times. To find a good optimization solution, the two processes need to interact with each other iteratively. The high computational cost of the approaches can make them unsuitable for dynamic traffic optimizations. Furthermore, all these approaches assume exact knowledge of traffic demand over the time horizon is known beforehand; this assumption does not hold when traffic demand is stochastic \cite{lane_cell_model}. On the contrary, CLLA is lightweight and can adapt to highly dynamic situations based on reinforcement learning. The learning agents can find the effective lane-direction changes for individual road intersections even when traffic demand changes dramatically.

	\subsection{Traffic Management with Connected Autonomous Vehicles}
	Some recent development of traffic management solutions is tailored for the era of connected autonomous vehicles. Telebpour and Mahmassani develop a traffic management model that combines connected autonomous vehicles and intelligent road infrastructures for improving traffic flow~\cite{CAV_tm}. Guler et al. develop an approach to improve traffic efficiency at intersection using connected autonomous vehicles~\cite{CAV_I}. We use the CAVs as an opportunity for lane optimization. To the best of our knowledge, we are the first to study dynamic lane-direction changes at large scale networks in the era of connected autonomous vehicles.
	
	\section{Problem Definition} \label{prob def}
	
	In this section, we formalize the problem of traffic optimization based on dynamic configuration of lane directions. Our problem is similar to Network Design Problem \cite{NDP}, however NDP is based on the assumption of knowledge of traffic demand for whole time horizon at time zero and the output network is designed for a common state. We try to configure a graph (road network) at regular time intervals based on real-time traffic, thus we name this problem,  \emph{Dynamic Graph Resource Allocation} problem. 
	
	Let $G(V,E)$ be a road network graph, where $V$ is a set of vertices and $E$ is a set of edges. Let us assume that edge $e \in E$ connects between vertex $x\in V$ and vertex $y\in V$. The edge has three properties. The first property is the the total number of lanes, $n_{e}$, which is a constant number. The second property is the number of lanes that start from $x$ and end in $y$, $n_{e_1}$. The third property is the number of lanes in the opposite direction (from $y$ to $x$), $n_{e_2}$. $n_{e_1}$ and $n_{e_2}$ can change but they are always subject to the following constraint.
	\begin{equation}
	n_{e_1} + n_{e_2} = n_{e}
	\end{equation}
	
	We assume that a CAV follows a pre-determined path based on an origin-destination (O-D) pair. Let the number of unique O-D pairs of the existing vehicles be $k$ at a given time $t$. For the $i^{th}$ ($i<=k$) O-D pair, let $d_{i,t}$ be the traffic demand at time $t$, i.e., the number of vehicles with the same O-D pair at that time. The traffic demand can be stochastic. Let the travel time of vehicle $j$ with the $i^{th}$ O-D pair be $TT_{i,j}$, which is the duration for the vehicle to move from the origin to destination.
	
	For a given time $t$, the average travel time of all the vehicles, which will reach their destination during a time period $T$ after $t$, can be defined as 	
	\begin{equation}
	ATT_{<t,t+T>}=\sum_{i=1}^{k} \sum_{j=1}^{m_{i}} TT_{i,j} / \sum_{i=1}^{k}m_{i}
	\end{equation}
	where $m_{i}$ is the number of vehicles with the $i^{th}$ O-D pair that will complete their trips between $t$ and $t+T$.
	
	We define and solve a version of the problem where at frequent regular intervals we optimize travel time, while changing the lane arrangement in all edges. 
		We find a new graph $G^\prime_{t}(V,E^\prime)$ at a given time $t$ from previous $G$ at previous time step. Let $e^\prime_1, e^\prime_2 \in E^\prime$ and $e^\prime_1$ connects from vertex $x$ to vertex $y$ and $e^\prime_2$ connects vertex $y$ to vertex $x$. We find for all edges the values of $n_{e^\prime_1}$ and $n_{e^\prime_2}$, such that the average travel time  $ATT_{<t,t+T>}$ is minimized. We call this Dynamic Resource Allocation problem.
	
	\section{Demand-based Lane Allocation (DLA)} \label{demand based}
	When considering dynamic lane-direction changes, a straightforward solution can use a centralized approach to optimize lane-directions based on the full knowledge of traffic demand, i.e., the number of vehicle paths that pass through the road links. We call this solution \emph{Demand-based Lane Allocation (DLA)}. Algorithm~\ref{baseline_algo} shows the implementation (in pseudo code) of such an idea to compute the configuration of lane-directions. DLA allocates more lanes for a specific direction when the average traffic demand per lane in the direction is higher than the average traffic demand per lane in the opposite direction. To specify the directions, we define two terms, \emph{upstream} and \emph{downstream}. The terms are defined as follows. Let us assume that all the vertices of the road network graph are ordered by the identification number of the vertices. Given two vertices, $v_1$ and $v_2$, and a direction that points from $v_1$ to $v_2$, we say that the direction is \textbf{upstream} if $v_1$ is lower than $v_2$ or \textbf{downstream} if $v_1$ is higher than $v_2$. 
	
	DLA first computes the traffic demand at the edges that are on the path of the vehicles (Line 1-6). The traffic demand is computed for the upstream direction ($up_e$) and the downstream direction ($down_e$) separately. Then it quantifies the difference of the average traffic demand per lane between the two directions (Line 9-11). Based on the difference between the two directions, DLA decides whether the number of lanes in a specific direction needs to be increased (line number 11-14). We should note that increasing the number of lanes in one direction implies that the number of lanes in the opposite direction is reduced. DLA only reduces the number of lanes in a direction if the traffic demand in that direction is lower than a threshold (Line 12). The complexity of the algorithm is $\mathcal{O}(k|E|)$, where $|E|$ is the number of edges in $G$ and $k$ is the number of O-D pairs.
	
	While straightforward to implement and effective, there are two notable drawbacks of DLA. First, the algorithm does not consider real-time traffic conditions, such as the queue length at a given time, during optimization; the only information used for optimization is (assumed apriori known) traffic demand and exact knowledge of traffic demand is difficult to obtain in dynamic road networks \cite{lane_cell_model}. This can make the lane-direction configuration less adaptive (and less applicable) to real-time traffic conditions. Second, the lane-direction optimization for individual road segments is performed individually, not considering the potential impact of a lane-direction change at a road segment on other road segments in the same road network. Therefore, a lane-direction change that helps improve traffic efficiency at a road link may lead to the decrease of traffic efficiency in other parts of the road network.
	\setlength{\textfloatsep}{0pt}
	\begin{algorithm} 	
		\KwIn{A road network graph $G(V,E)$.}
		\KwIn{The set of paths. A path is a sequence of edges on the shortest path between a specific Origin-Destination (O-D) pair. The set of paths includes the paths of all unique O-D pairs.}
		\KwIn{The demands associated with the paths, where a demand is the number of vehicles that follow a specific path.}
		\KwIn{$th$: demand threshold.} 
		\KwIn{$g$: minimal gap in traffic demand that can trigger a lane-direction change.}
		
		\ForEach{$p \in$ paths}{
			\ForEach{$e \in p$}{
				\If{$p$ passes $e$ in the upstream direction}{
					$up_e$ += demand of $p$
				}
				\If{$p$ passes $e$ in the downstream direction}{
					$down_e$ += demand of $p$
				}	
			}
		}
		
		\ForEach{$e \in E$}{
			
			$minLoad$ $\gets$ min($up_e, down_e$)
			
			$down_e' \gets down_e$ / number of downstream lanes
			
			$up_e' \gets up_e$ / number of upstream lanes
			
			$gap$ $\gets$ $\dfrac{down_e' - up_e'}{up_e' + down_e'}$
			
			\If{$minLoad < th$}{
				\If{$gap >  g$}{
					move one upstream lane to the set of downstream lanes 
				}
				\If{$gap < -g$}{
					move one downstream lane to the set of upstream lanes
				}
			}
			
		}
		\caption{\bf Demand-based Lane Allocation (DLA)}
		\label{baseline_algo}
		
	\end{algorithm}

\section{Coordinated Learning-based Lane Allocation (CLLA)} 

To tackle the problems of the straightforward solution, we propose a fundamentally different solution, a Coordinated Learning-based Lane Allocation (CLLA) framework. CLLA uses a machine learning technique to help optimize lane-direction configurations, which allows the framework to adapt to a high variety of real-time traffic conditions. In addition, CLLA coordinates the lane-direction changes by considering the impact of a potential lane-direction change on different parts of the road network. DLA, on the other hand, does not consider the global impact of lane-direction changes. Another difference between the two is that DLA requires the full path of vehicles to be known for computing traffic demand. As detailed later, CLLA only needs to know partial information about vehicle paths in addition to certain information about real-time traffic conditions, such as intersection queue lengths and lane configuration road segments which can be obtained from inductive-loop traffic detectors. 

CLLA uses a two-layer multi-agent architecture. The bottom layer consists of learning agents that are responsible for optimizing the direction of lanes connected to specific intersections. Using the multi-agent approach can significantly boost the speed of learning. The lane-direction changes that are decided by the learning agents are aggregated and evaluated by a coordinating agent at the upper layer, which will send the globally optimized lane-direction configuration to the bottom layer for making the changes. 

\begin{figure}[htbp]
    \centering
	\includegraphics[width=\linewidth]{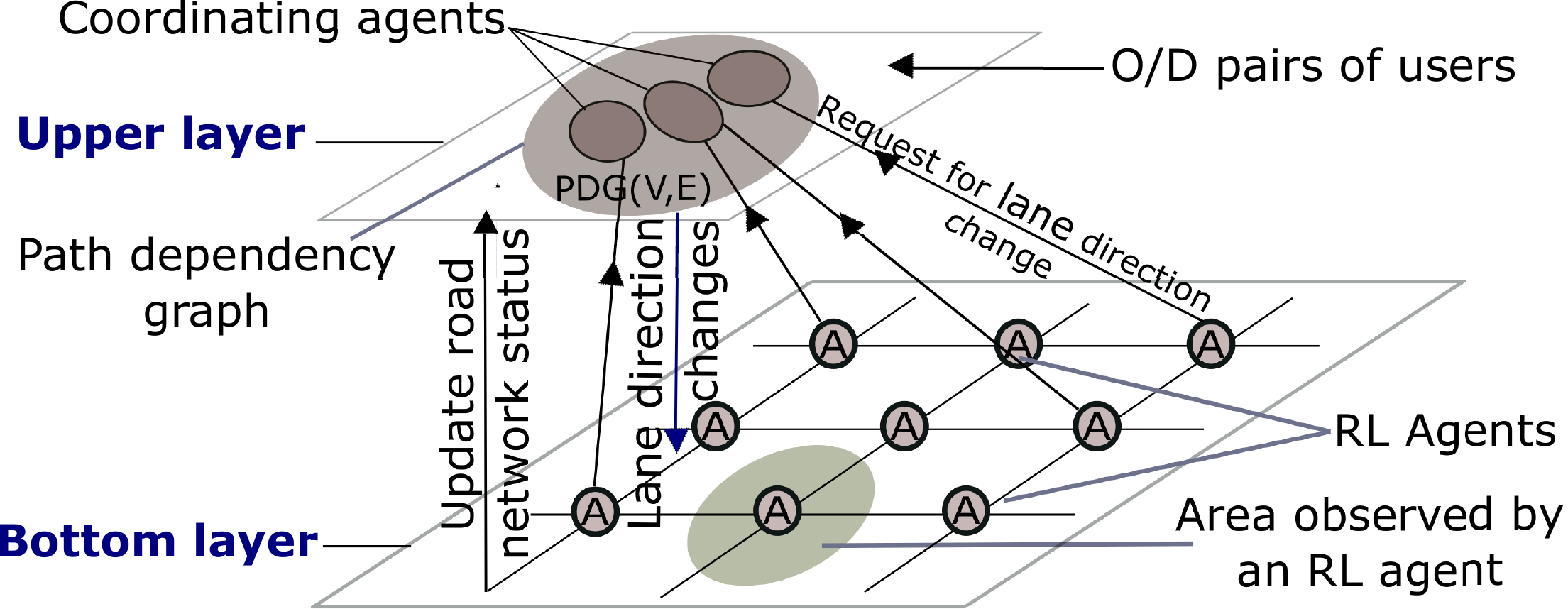}
	\caption{An overview of the CLLA's architecture }
	\label{overall}
\end{figure}
A more detailed overview of CLLA is shown in Figure~\ref{overall}. As the figure shows, an agent in the bottom layer only observes the local traffic condition around a specific intersection. Agents make decisions on lane-direction changes independently. Whenever an agent needs to make a lane-direction change, it sends the proposed change to the coordinating agent in the upper layer. The agents also send certain traffic information to the upper layer periodically. The information can help indicate whether there is an imbalance between upstream traffic and downstream traffic at specific road segments. The coordinating agent evaluates whether a change would be beneficial at the global level. The evaluation process involves a novel data structure, \emph{Path Dependency Graph (PDG)}, to inform decisions sent from the bottom layer. The coordinator may allow or deny a lane-direction change request from the bottom-layer. It may also decide to make further changes to lane-directions in addition to the requested changes. After evaluation, the coordinating agent informs the bottom-layer agents of the changes to be made.

\begin{figure}[htbp]
    \centering
	\includegraphics[width=.85\linewidth]{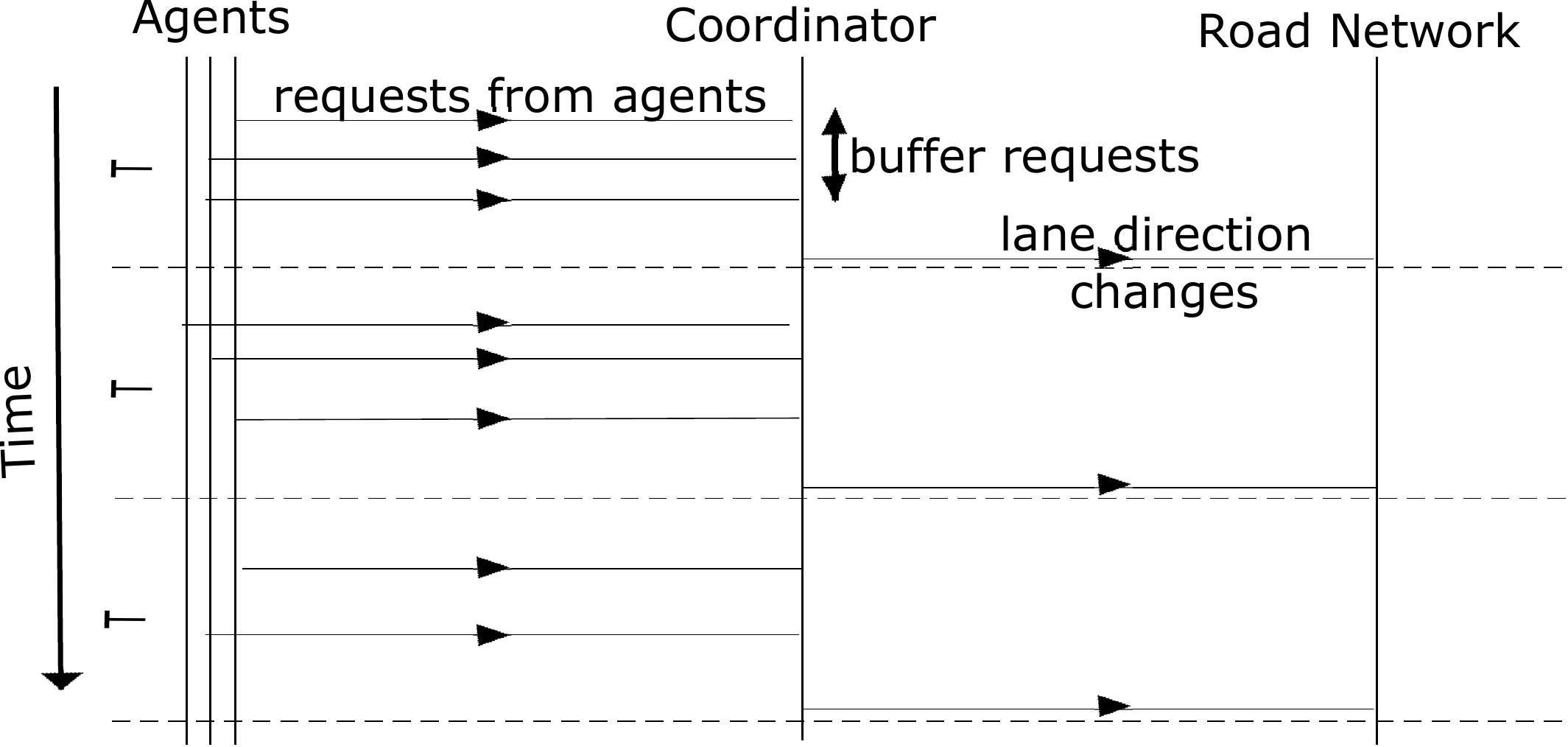}
	\caption{The CLLA's communication timeline between agents and the coordinator}
	\label{time pdg}
\end{figure}
We should note that the coordinator does not need to evaluate a lane-direction change request as soon as it arrives. As shown in Figure~\ref{time pdg}, the coordinator evaluates the lane-direction changes periodically. The time interval between the evaluations is $T$. All the requests from the bottom-layer agents are buffered during the interval. The exact value of the interval needs to be adjusted case by case. A short interval may increase the computational cost of the solution. A long interval may decrease the effectiveness of the optimization.

\subsection{CLLA Algorithm}
\label{sec-CLLA}
Algorithm~\ref{alg-CLLA} shows the entire optimization process of CLLA. During one iteration of the algorithm, each learning agent finds the lane-direction changes around a specific road intersection using the process detailed in Section~\ref{sec-qLearning}. The proposed change is stored as an edge-change pair, which is buffered in the system (Line 5). When it is time to evaluate the proposed changes, the system uses the \emph{Direction-Change Evaluation} algorithm (Section~\ref{sec-coord}) to quantify the conflicts between the proposed changes (Line 8). For example, when a learning agent proposes to increase the number of upstream lanes on road segment $s_1$ while another agent proposes a lane-direction change on a different road segment $s_2$, which can lead to the increase of the downstream traffic flow on $s_1$, there is a conflict between the proposed changes for $s_1$. The Change Evaluation algorithm also expands the set of the proposed changes that may be beneficial. Upon returning from the Change Evaluation algorithm, CLLA checks the expanded set of edge-change pairs. For each edge-change pair, if the number of conflicts for the edge are below a given limit, the change is applied to the edge (Line 12).

\begin{algorithm}
   
    \KwIn{$T$, action evaluation interval}

	\KwIn{$EC_{initial}$, set of edge-change pairs proposed by the learning agents}
	
	\KwIn{$EC_{expanded}$, set of edge-change pairs given by the coordinator}
	
   	\KwIn{$mc$, the maximum number of conflicts between lane-direction changes before a proposed lane-direction change can be applied to an edge.}

	\KwIn{$G(V,E)$, a road network graph.}
	
	\KwIn{$l$, the lookup distance for building Path Dependency Graph.}
	
	\KwIn{$dp$, the maximum search depth in Path Dependency Graph for evaluating lane-direction changes.}

    \While{True}{
        \ForEach{$agent \in Agents $}{
        	determine the best lane-direction change for all the edges (road segments) that connect to the vertex (road intersection) controlled by the $agent$ 
        	
             \ForEach{edge $e$ that needs a lane-direction change}{
                $EC_{initial}.insert(\{e,change\})$
                }
            }
        
        \If{T = t}{ 
            $t \gets 0$
            
           $EC_{expanded}$ $\gets$ \textbf{Direction-Change Evaluation ($EC_{initial}$, $G$, $l$, $dp$)}
            
			$EC_{initial} \gets \emptyset$ 
            
            \ForEach{$\{e,change\}$ in $EC_{expanded}$}{
                \If {number of conflicts on $e$ is less than $mc$}
                    {apply the lane-direction change to $e$}
                }
            }
                   $t \gets t + 1$
        }
 
        \caption{\bf Coordinated Lane Allocation (CLLA)}
        \label{alg-CLLA}

\end{algorithm}

\subsection{Learning-based Lane-direction Configuration}
\label{sec-qLearning}
In the CLLA framework, the bottom-layer agents use the Q-learning technique to find suitable lane-direction changes based on real-time traffic conditions. Q-learning aims to find a policy that maps a state to an action. The algorithm relies on an \emph{action value function}, $Q(s,a)$, which computes the quality of a state-action combination. In Q-learning, an agent tries to find the optimal policy that leads to the maximum action value. Q-learning updates the action value function using an iterative process as shown in Equation~\ref{eq_qLearning}. 
\begin{equation} \label{eq_qLearning}
Q^{new}_{t}(s,a) = (1 - \alpha)Q_{t}(s,a) + \alpha(r_{t+1} + \gamma \underset{a} max Q(s_{t+1},a))
\end{equation}
where $s$ is the current state, $a$ is a specific action, $s_{t+1}$ is the next state as a result of the action, $\underset{a} max Q(s_{t+1},a)$ is the estimated optimal action value in the next state, value $r_{t+1}$ is an observed reward at the next state, $\alpha$ is a learning rate and $\gamma$ is a discount factor. In CLLA, the states, actions and rewards used by the learning agents are defined as follows.

\subsubsection{States} \label{States}
A learning agent can work with four types of states. The first state represents the current traffic signal phase at an intersection. The second state represents the queue length of incoming vehicles that are going to pass the intersection without turning. The third state represents the queue length of incoming vehicles which are going to turn at the intersection. The fourth state represents the queue length of outgoing vehicles, i.e., the vehicles that have passed the intersection. Although it is possible to add other types of states, we find that the combination of the four states can work well for traffic optimization.

\subsubsection{Actions}
There are three possible actions: increasing the number of upstream lanes by 1, increasing the number of downstream lanes by 1 or keeping the current configuration. When the number of lanes in one direction is increased, the number of lanes in the opposite direction is decreased at the same time. Since a learning agent controls a specific road intersection, the agent determines the action for each individual road segment that connects with the intersection. An agent is allowed to take a lane-changing action only when there is a traffic imbalance on the road segment (see Equation~\ref{definition_imbalance} for the definition of traffic imbalance).

\subsubsection{Rewards}
We define the rewards based on two factors. The first factor is the waiting time of vehicles at an intersection. When the waiting time decreases, there is generally an improvement of traffic efficiency. Hence the rewards should consider the difference between the current waiting time and the updated waiting time of all the vehicles that are approaching the intersection. The second factor is the difference between the length of vehicle queues at different approaches to an intersection. When the queue length of one approaching road is significantly longer than the queue length of another approaching road, there is a higher chance that the traffic becomes congested in the former case. Therefore we need to penalize the actions that increase the difference between the longest queue length and the shortest queue length. The following reward function combines the two factors. A parameter $\beta$ is used to give weights for the two factors. We normalized the two factors to stabilize the learning process by limiting reward function between 1 to -1. To give equal priority to both factors, we set $\beta$ to 0.5 in the experiments. 

\begin{displaymath}  \begin{aligned} R = (1-\beta)\times\frac{\text{Current wait time} - \text{Next wait time}}{\text{max(Next wait time, Current wait time)}}\end{aligned} \end{displaymath}
\begin{displaymath} \begin{aligned} - \beta\times\frac{\text{Queue length difference}}{\text{Aggregated road capacity}} \end{aligned} \end{displaymath}

\subsection{Coordination of Lane-direction Changes}
\label{sec-coord}
We develop the coordinating process based on the observation that a locally optimized lane-direction change may conflict with the lane-direction changes that happen in the surrounding areas. A conflict can happen due to the fact that the effect of a lane-direction change can spread from one road segment to other road segments. For example, let us assume that a constant portion of the upstream traffic that passes through road segment $x$ will also pass through road segment $y$ in the upstream direction later on. An increase of the upstream lanes on $x$ can lead to a significant increase of upstream traffic on $x$ due to the increased traffic capacity in the direction. Over time, the traffic volume change on $x$ can lead to the increase of the upstream traffic on $y$, which implies that the number of upstream lanes at $y$ may need to be increased to suit the change of traffic volume. In this case, the lane-direction change at $y$ can be seen as a \emph{consequential change} caused by the change at $x$. However, the learning agent that controls the lane-directions at $y$ may suggest an increase of downstream lanes based on the current local traffic condition at $y$. If this is the case, the locally optimized change will conflict with the consequential change. The key task of the coordinating process is quantifying such conflicts in road networks. If there are a large number of conflicts at a road segment, the locally optimized change should not be applied because it may have a negative impact on the traffic flows at the global level later on. This is a key idea behind the coordination process of our solution. As shown in Section~\ref{sec-CLLA}, our solution applies a proposed lane-direction change to a road segment only when the number of the conflicts is below a given threshold. 

To help identify the conflicts between lane-direction changes, we develop a novel data structure, named \emph{Path Dependency Graph (PDG)}. The data structure maintains several types of traffic information, including the path of traffic flow, the proposed lane-direction changes and the current traffic conditions. The coordinating agent uses PDG to search for the locations of consequential lane-direction changes. The conflicts between lane-direction changes are then identified by comparing the consequential lane-direction changes and the proposed lane-direction changes at the same locations. The coordinating agent also proposes additional lane-direction changes using PDG.

A PDG ($PDG(V^{PDG}, E^{PDG})$)  consists of a number of vertices and a number of directional edges. A vertex $v \in V^{PDG}$ represents a road segment. The corresponding road segments of the two vertices must appear in the path of a vehicle. A vertex can connect to a number of out-degree edges and a number of in-degree edges. The direction of an edge depends on the order of traffic flow that goes through the two road segments. An edge that starts from vertex $v_1$ and ends in vertex $v_2$ shows that the traffic flow will pass through $v_1$'s corresponding road segment first then pass through $v_2$'s corresponding road segment later. We should also note that the two road segments, which are linked by an edge, do not have to share a common road intersection, i.e., they can be disjoint. Given the path of all the vehicles, a PDG can be constructed such that all the unique road segments on the vehicle paths have corresponding vertices in the graph. For each pair of the road segments on a vehicle path, there is a corresponding edge in the graph. If the path of two or more vehicles goes through the same pair of road segments, there is only one corresponding edge in the graph. 

A vertex of PDG has the following properties. 
\begin{itemize}
\item\textbf{Proposed Change:} The proposed lane-direction change at the corresponding road segment. This may be submitted from a learning agent or created by the system during the coordinating process. The property value can be $1$, $0$ and $-1$. A value of $1$ means the upstream direction gets one more lane. A value of $0$ means there is no need for a change. A value of $-1$ means the downstream direction gets one more lane.
\item\textbf{Consequential Changes:} A list of potential lane-direction changes caused by lane-direction changes at other road segments. Similar to the \textbf{Proposed Change} property, the value of a consequential change can be $1$, $0$ and $-1$.
\item\textbf{Imbalance:} The lane direction which has a considerably higher traffic load than the other direction. The property value can be $upstream$, $downstream$ and $none$. In our implementation, the imbalance of traffic load is measured based on the queue length in the opposite directions. Let $q_{up}$ be the upstream queue length and $q_{down}$ be the downstream queue length. Let the total queue length in both directions be $q_{total}$. Let $P$ be a threshold percentage. The property value is computed as follows.
\begin{equation}
    Imbalance = 
    \begin{cases}
      upstream, & \text{if}\ q_{up}/q_{total} > P \\
      downstream, & \text{if}\ q_{down}/q_{total} > P \\
      0, & \text{otherwise}
    \end{cases}
    \label{definition_imbalance}
\end{equation}
Due to the dynamic nature of traffic, imbalance value may change frequently, leading to frequent changes of lane-directions. This may not be ideal in practice. One can get a steady imbalance value by adding certain restrictions in the computation. For example, one may require that the ratio between upstream queue length and the total queue length must be above the threshold for a certain period of time before setting the imbalance value to upstream.
\item\textbf{Current Lane Configuration:} The number of upstream lanes and the number of downstream lanes in the corresponding road segment.
\end{itemize}

An edge of PDG has a property called \textbf{impact}, which shows whether a lane-direction change at the starting vertex can lead to the same change at the ending vertex. The value of this property can be $1$ or $-1$. A value of $1$ means the change at both vertices will be the same. For example, if the change at the starting vertex is increasing the number of upstream lanes, the change at the ending vertex will also be increasing the number of upstream lanes. A value of $-1$ means the changes at the vertices will be opposite to each other. The relationship between the changes and the property value is shown in Equation~\ref{equ-impact-change}, where the starting vertex is $v_1$ and the ending vertex is $v_2$. The property value is determined based on the path of the majority of the vehicles that move between the two corresponding road segments. If the path passes through both road segments in the same direction, the property value is $1$. Otherwise, the property value is $-1$. The impact property is key for finding the consequential change at the ending vertex given the change at the starting vertex. As shown in Equation~\ref{equ-consequential}, the consequential change at the ending vertex can be computed based on the property value and the initial change at the starting vertex.
\begin{equation} \label{equ-impact-change}
    impact_{(v_1,v_2)} = change_{v_1} \times change_{v_2}  
\end{equation}
\begin{equation} \label{equ-consequential}
    change_{v_2} = impact_{(v_1,v_2)} \times  change_{v_1}
\end{equation}
When constructing a PDG, it may not be necessary to consider the full path of vehicles due to two reasons. First, the full path of vehicles can consist of a large number of road segments. The size of the graph can grow exponentially when the length of path increases. Second, due to the highly dynamic nature of traffic, the coordination of lane-direction changes should only consider the traffic conditions in the near future. Therefore, in our implementation, we set an upper limit to the number of road segments in vehicle paths when building a PDG. The limit is called \emph{lookup distance} in our experiments.

We show an example road network (Figure~\ref{roadnetwork}) and its corresponding PDG (Figure~\ref{pdg}). The road network has 12 roads segments (A to L). There are two paths going through the network, path $\alpha$ and path $\beta$. Path $\alpha$ passes through 4 edges (A, F, I, J). Path $\beta$ passes through 3 edges (C, F, H). The 7 edges correspond to 7 vertices in the PDG. The PDG contains 3 edges starting from A (A-F, A-I, A-J) because path $\alpha$ passes through A, F, I and J in the road network. Similarly, the PDG contains 2 edges that starting from F. 

For each edge in the PDG, the value of its impact property is attached to the edge. As path $\alpha$ goes through the edges (A, F, I and J) in the upstream direction (Figure~\ref{roadnetwork}), the impact value at all the edges between the corresponding vertices is 1 in the PDG (Figure~\ref{pdg}). Differently, the impact value of the edge C-H is -1 in the PDG. This is because path $\beta$ goes through C in the upstream direction but it goes through H in the downstream direction.

\begin{figure}[htbp]
    \begin{subfigure}[b]{.28\textwidth}
    \centering
	\includegraphics[width=.88\linewidth]{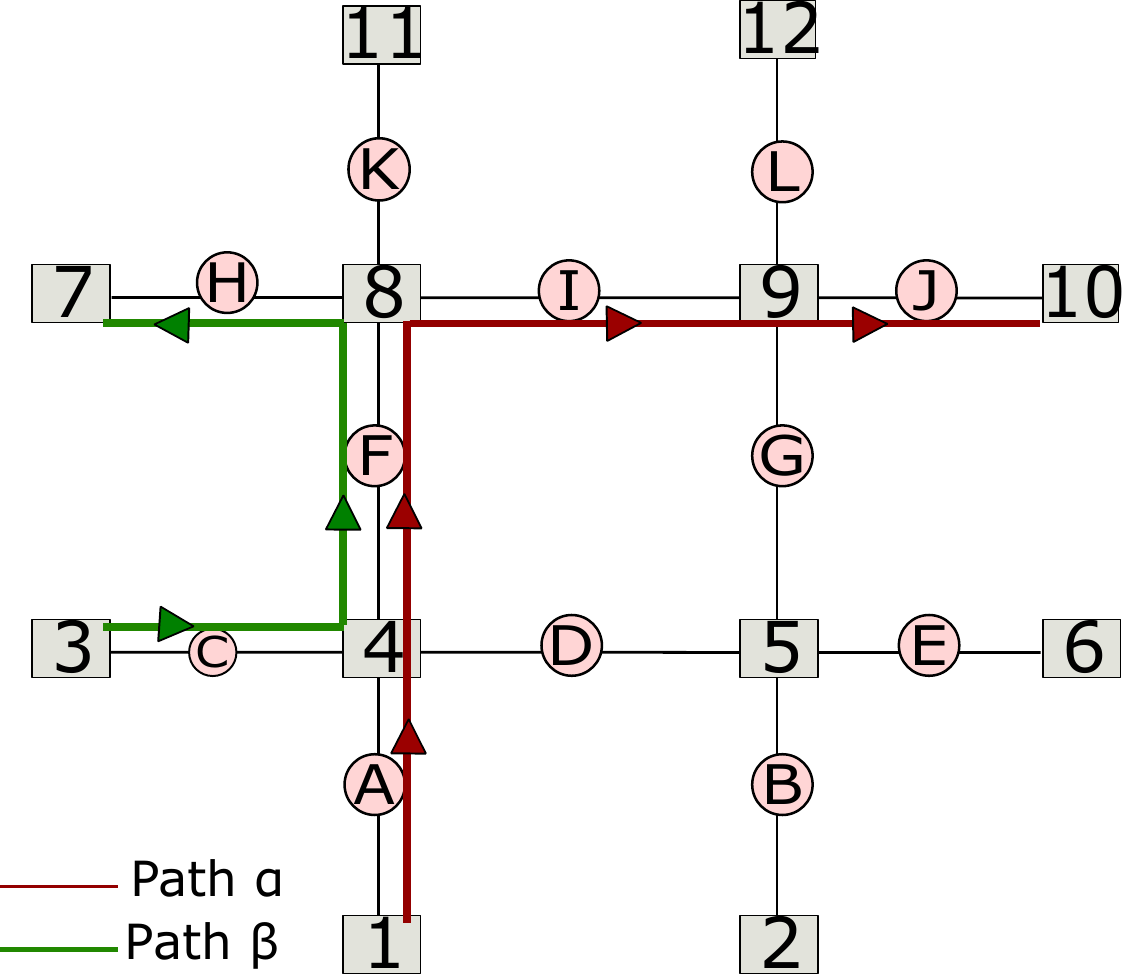}
	\caption{A simple road network with\\ two paths (red and green)} 
	\label{roadnetwork}
	\end{subfigure}%
    \begin{subfigure}[b]{.22\textwidth}
    \centering
	\includegraphics[width=.88\linewidth]{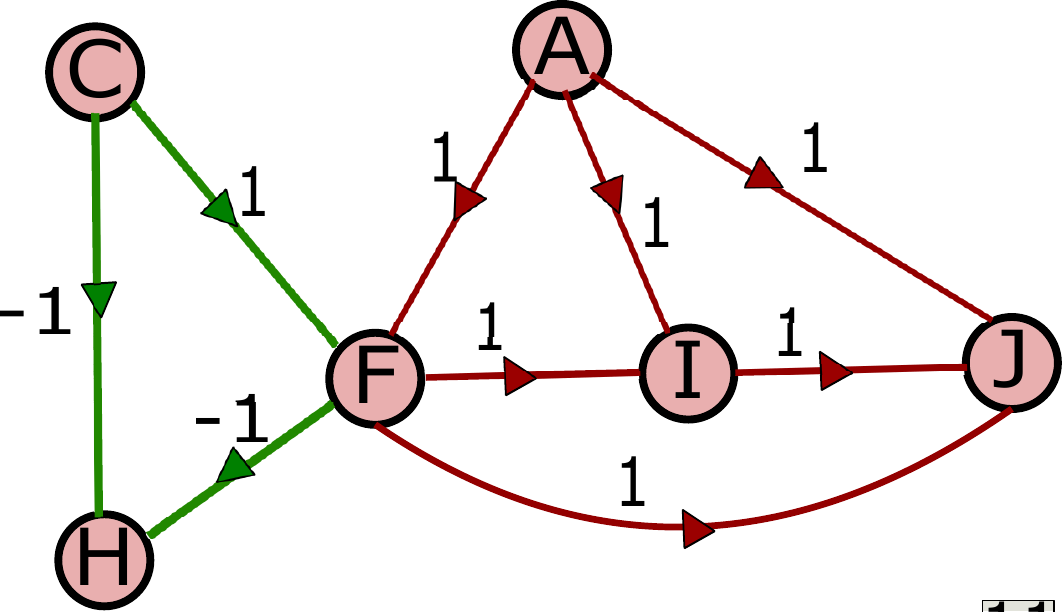}
	\caption{Path dependency graph (PDG) based on the road network in Figure~\ref{roadnetwork}}.
	\label{pdg}
	\end{subfigure}
\vspace{-6mm}
\end{figure}

The coordinator uses the Direction-Change Evaluation algorithm (Algorithm~\ref{PDG Algorithm}) to quantify the conflicts between lane-direction changes. The algorithm traverses through a PDG in a breadth-first manner in iterations. The number of iterations is controlled by a \emph{depth} parameter (shown as $dp$ in Algorithm~\ref{PDG Algorithm}). In the first round of iteration, the algorithm starts with the lane-direction changes that are proposed by the bottom-layer learning agents. For each vertex with a proposed change, its first-depth neighbours (out-degree nodes) are visited (Step 6). For each of the neighbours, the consequential change caused by the proposed change is computed. This can be done with the process shown in Equation~\ref{equ-consequential}. Then the algorithm updates the count of conflicts at the neighbour's corresponding road network edge. In the next iteration, the algorithm starts with all the neighbour vertices that are visited in the previous round. After each iteration $dp$ is decremented. The algorithm stops when $dp$ reaches zero.

The Direction-Change Evaluation algorithm not only quantifies the conflicts between lane-direction changes but also expands the set of lane-direction changes for the road segments that are visited during the traversal of the PDG. The rationale is that the bottom-layer learning agents may not propose lane-direction changes for road segments when they do not predict any benefit of the change based on local traffic conditions. However, the lane-direction changes in other parts of the road network may eventually affect the traffic conditions at these road segments, leading to traffic congestions. The algorithm pre-emptively attempts lane-direction changes for these road segments when it predicts that there can be consequential changes caused by the changes in other parts of the road network. This can help mitigate incoming traffic congestions. As shown in Step 6 of Algorithm~\ref{PDG Algorithm}, a Direction-Change Creation algorithm is used for proposing additional lane-direction changes. Details of the Direction-Change Creation algorithm are shown in Algorithm~\ref{Road decision algorithm}. Every time coordinator executes \textbf{Direction-Change Evaluation}, a new lane configuration is computed and new $G^\prime_{t}(V,E^\prime)$ is generated.

\begin{algorithm}[t]
	
	\KwIn{$EC_{initial}$, a set of edge-change pairs proposed by the learning agents}	
	\KwIn{$G(V,E)$, A road network graph. Each edge in the graph has a property, \textbf{conflict count}, which has an integer value that is set to 0 initially.}
	\KwIn{$l$, the lookup distance of PDG}
	\KwIn{$dp$, the depth of search}
	\KwOut{$EC_{expanded}$, a set of edge-change pairs given by the coordinator}
	
	Build a PDG based on the next $l$ road segments on the path of vehicles. For each PDG vertex, its properties, \textbf{proposed change} and \textbf{consequential changes}, are set to empty values initially.
	
	Create an empty set $N$. For each edge-change pair in $EC_{initial}$, find the corresponding vertex $v$ in PDG and update its \textbf{proposed change} property. Add $v$ to $N$. 
	
	Set the current depth of search to $dp$. 
	
	If the current depth is above 0, do the following steps. Otherwise, jump to Step 8.
	
	Create an empty set $N'$. 	
	
	For each $v$ in $N$, first check whether $v$ has a proposed change. If not, get a proposed change for $v$ using the \textbf{Direction-Change Creation} algorithm. Then for each of $v$'s neighbours at the end of its out-degree arcs, $v_o$, identify the consequential change at the vertex that is caused by the proposed change at $v$. Add the consequential change to the \textbf{consequential changes} of $v_o$ if the change does not exist on the list. If $v_o$ already has a proposed change but the proposed change is different to the consequential change at $v_o$, increase the \textbf{conflict count} of the corresponding road network edge by 1. Add $v_o$ to $N'$.
	
	Decrease the current depth of search by 1. Replace the vertices in $N$ with the vertices in $N'$. Go back to Step 4.	
	
	For each PDG vertex $v$ with a proposed change, create a corresponding edge-change pair and add the pair to $EA_{expanded}$. Exit the algorithm.
	
	\caption{\bf Direction-Change Evaluation} 
	\label{PDG Algorithm}
\end{algorithm}

\begin{algorithm} [t]
	\KwIn{$v$, a PDG vertex that corresponds to an edge in a road network graph. The value of the \textbf{imbalance} property is set to $none$ initially.}
	\KwOut{$change$, the proposed lane-direction change for $e$, which can be $1 (upstream)$, $0 (none)$ and $-1 (downstream)$. The default value is $0$.}
	
	$consequential_{up}$: whether the consequential changes at $v$ include one that increases the number of upstream lanes.
	
	$consequential_{down}$: whether the consequential changes at $v$ include one that increases the number of downstream lanes.
	
	\If{\textbf{imbalance} = upstream}{
	    \If {$consequential_{up} = True$ and $consequential_{down} = False$}{
	        $change \gets 1$ (change one lane from downstream to upstream)}
	 }
	 
	 \If{\textbf{imbalance} = downstream}{
	    \If {$consequential_{up} = False$ and $consequential_{down} = True$}{
	        $change \gets -1$ (change one lane from upstream to downstream)}
	 } 
	 
	 \If{\textbf{imbalance} = none}{
	    \If {($consequential_{up} = True $ and downstream has more lanes than upstream}
	        {$change \gets 1 $ (change one lane from downstream to upstream)}
	    \If {($consequential_{down} = True $ and upstream has more lanes than downstream}
	        {$change \gets -1$ (change one lane from upstream to downstream)}
	  }
	        
    \caption{\bf Direction-Change Creation} \label{Road decision algorithm}
\end{algorithm}

\textbf{Complexity of Coordinating Process.}
Let us assume there are $m$ number of requests from bottom layer agents. The degree of a node in PDG is $deg(v)$, where $v \in V^{PDG}$. The algorithm traverses $m$ BFTs throughout the PDG for certain depth $dp$. Then complexity of $m$ BFTs is $\mathcal{O}(m(deg(v)^{dp}))$. However, according to lemma \ref{deg lemma}, $deg(v)$ is independent of the road network size or number of paths for a given $l$. The depth $dp$ is constant irrespective of the road network size. Then the algorithm complexity can be reduced to $\mathcal{O}(m)$; the algorithm complexity is linear in the number of requests from agents within the buffering period. In the worst case, there can be requests for each road segment of the road network, $G(V,E)$, leading to the complexity of $\mathcal{O}(|E|)$.

\textbf{Distributed version.} Algorithm \ref{PDG Algorithm} can work with a set of distributed agents (coordinating agents) in the upper layer. In Algorithm \ref{PDG Algorithm}, execution is independent of the order of requests coming from the bottom layer agents. Therefore, requests can be processed in a distributed manner. Every coordinating agent traverses first depth and inform changes to other agents. Once every agent finishes their first depth, all coordinating agents start their second depth, and so on. In such a setting, the complexity of the algorithm is $\mathcal{O}(1)$. In this work, we implemented the centralized version, however, when applied to larger road networks, the distributed version can be implemented.

\section{Experimental methodology} 
\label{experiments}

We compare the proposed algorithm, CLLA, against DLA and two other baseline algorithms using traffic simulations. We evaluate the performance of the algorithms for two road networks based on real traffic data. The effects of individual parameters of CLLA and DLA are also evaluated. The rest of the section details the settings of the experiments. 

\subsection{Experimental setup}
\textbf{Simulation Model.} We simulate a traffic system similar to the ones used for reinforcement learning-based traffic optimization in~\cite{Wiering,kohler2018traffic}. In our implementation, vehicles on a road link are modelled based on travel time, which is the sum of two values, \emph{pure transmit time} and \emph{waiting time}. \emph{Pure transmit time} is the time taken by a vehicle to travel through the road link at the free-flow speed. \emph{Waiting time} is the duration that a vehicle waits in a traffic signal queue. When the direction of a lane needs to be changed, all existing vehicles in the lane need to leave the lane and move into the adjacent lane in the same direction. Vehicles travelling in the opposite direction can use the lane only after it is cleared of traffic.

\textbf{Road Networks.} We run experiments based on the real taxi trip data from New York City~\cite{nyc}. The data includes the source, the destination and the start time of the taxi trips in the city. We pick two areas for simulation (Figure~\ref{NW1} and Figure~\ref{NW2}) because the areas contain a larger number of sources and destinations than other areas. 
\begin{figure}
	\vspace{-1.5em}
	\begin{subfigure}[b]{.25\textwidth}
		\centering
		\includegraphics[width=0.98\linewidth]{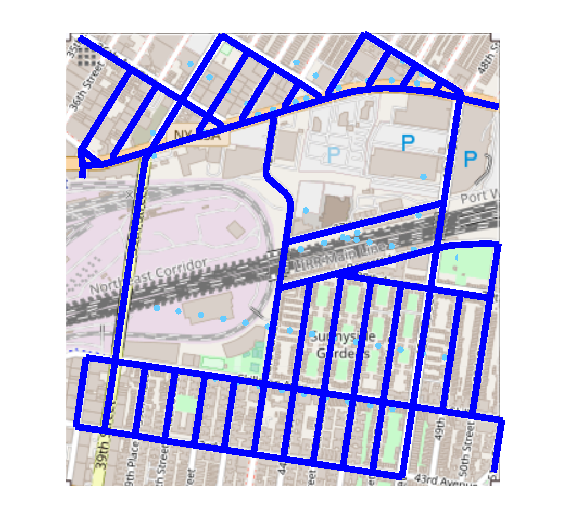}
		\caption{Long Island (LI)}
		\label{map1}
	\end{subfigure}%
	\begin{subfigure}[b]{.25\textwidth}
		\centering
		\includegraphics[width=1\linewidth]{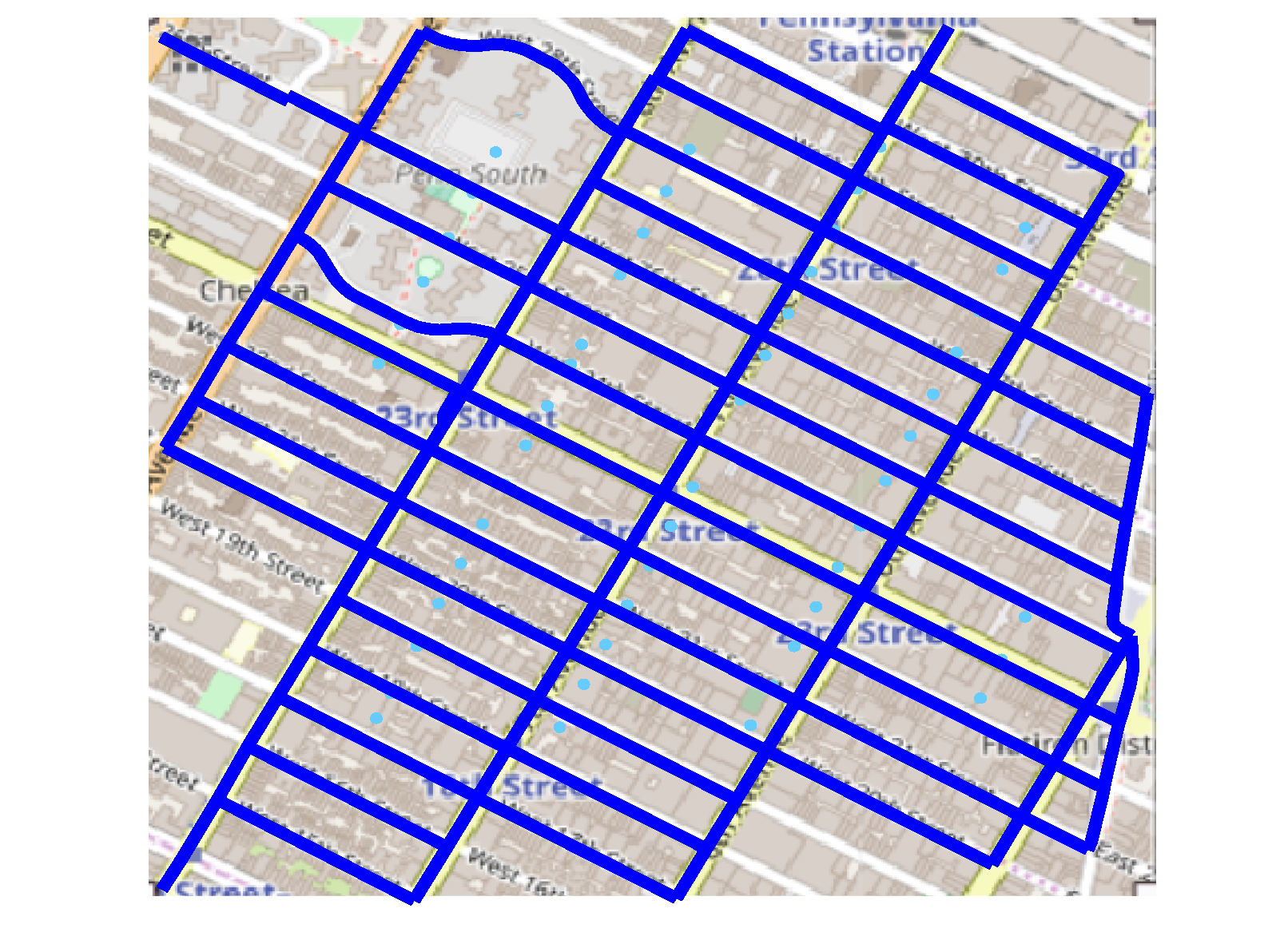}
		\caption{Midtown Manhattan (MM)}
		\label{map2}
	\end{subfigure}
	\caption{The road network of two simulation areas in New York}
\end{figure}
The road network of the simulation areas is loaded from OpenStreetMap~\cite{osm}. For a specific taxi trip, the source and the destination are mapped to the nearest OpenStreetMap nodes. The shortest path between the source and destination is calculated. The simulated vehicles follow the shortest paths generated from the taxi trip data. 

\textbf{Comparison baselines.}
\label{sec-baseline}
Different to the proposed solution, CLLA, the existing approaches for optimizing lane-directions are based on linear programming, which makes them unsuitable for large-scale dynamic optimization due to the high computation cost. Due to the lack of comparable solutions, we define three baseline solutions, which are used to compare against CLLA. In our experiments, the traffic signals use static timing and phasing, regardless of which solution is used. We conduct comparative tests against the following solutions:

\begin{itemize}
	\item {\textbf{No Lane-direction Allocations}} (\textbf{no-LA}): This solution does not do any lane-direction change. The traffic is controlled by static traffic signals only.
	\item {\textbf{Demand-based Lane Allocations}} (\textbf{DLA}): In this solution, the lane-direction changes are computed with Algorithm~\ref{baseline_algo}.
	\item {\textbf{Local Lane-direction Allocations}} (\textbf{LLA}): This solution uses multiple learning agents to decide lane-direction changes. The optimization is performed using the approach described in Section~\ref{sec-qLearning}. LLA is similar to CLLA but there is no coordination between the agents.
	\item {\textbf{Coordinated Learning-based Lane Allocations}} (\textbf{CLLA}): This is the two layer optimization framework described in Section~\ref{sec-CLLA}. 
\end{itemize}

\subsection{Evaluation Metrics}
We measure the performance of the solutions based on the following metrics.

\textbf{Average travel time}:
The travel time of a vehicle is the duration that the vehicle spends on travelling from its source to its destination. We compute the average travel time based on all the vehicles that complete their trips during a simulation. A higher average travel time indicates that the traffic is more congested during the simulation. Our proposed solutions aim to reduce the average travel time. More information about this metric is shown in Section~\ref{prob def}.

\textbf{Deviation from free-flow travel time}: 
The free-flow travel time of a vehicle is the shortest possible travel time, achieved when the vehicle travels at the speed limit of the roads without slowing down at traffic lights during its entire trip. Deviation from Free-Flow travel Time ($DFFT$) is defined as in Equation~\ref{equ-dfft}, where $t_a$ is the actual time and $t_f$ is the free-flow travel time. The lowest value of DFFT is 1, which is also the best value that a vehicle can achieve.

\begin{equation}\label{equ-dfft}
DFFT = t_a / t_f
\end{equation}

\subsection{Parameter Sensitivity Testing}
We evaluate the effects of the hyper-parameters of \textbf{CCLA} and \textbf{DLA}, which are directly related to lane-direction changes in the simulation model. To evaluate the effects of a specific parameter, we run a set of tests varying the value of the parameter (while keeping the value of other parameters at their default reported in Table~\ref{tbl-sensitivitySettings}). The average travel time is reported for each of the tests. The detailed settings of the parameters are shown in Table~\ref{tbl-sensitivitySettings}. We describe the parameters as follows.

\begin{table}[tbp]
	\centering
	\begin{tabular}{|l|l|l|}
		\hline
		\textbf{Parameter}  & \textbf{Range}    & \textbf{\begin{tabular}[c]{@{}l@{}}Default \\ value\end{tabular}} \\ \hline
		Cost of lane-direction change in CLLA&&\\
		and DLA (seconds)                                                              & 40 - 480 & 120                                                      \\ \hline
		Aggressiveness of lane-direction change &&\\
		in CLLA (seconds)                                          & 100-1000 & 300                                                      \\ \hline
		Depth in CLLA                                                                                & 1-5      & 2                                                        \\ \hline
		Lookup distance in CLLA                                                                      & 3-7      & 5                                                        \\ \hline
		Update period in CLLA (minutes)                                                              & 0.3 - 20 & 0.3 \\ \hline
		Update period in DLA (minutes) & 2.5-20     & 10                                                    \\ \hline
	\end{tabular}
	\caption{Settings used in the parameter sensitivity experiments}
	\label{tbl-sensitivitySettings}
\end{table}

\textbf{Cost of lane-direction change in CLLA}: The cost of a lane-direction change is the time spent on clearing the lane that needs to be changed. When the direction of a lane needs to be changed, all the existing vehicles in the lane need to leave the lane before the lane can be used by the vehicles from the opposite direction. The time spent on clearing the lane can vary due to various random factors in the real world. For example, the vehicles in the lane may not be able to move to an adjacent lane immediately if the adjacent lane is highly congested. We vary the value of this parameter in a large range, from 40 seconds to 480 seconds. 

\textbf{Aggressiveness of lane-direction change in CLLA}: This parameter affects the minimum interval between lane-direction changes. A lane-direction change can only happen when there is a traffic imbalance between the two directions at a road segment. The imbalance is computed based on the model as shown in Equation~\ref{definition_imbalance} (Section~\ref{sec-coord}). Based on an existing study~\cite{Lane}, we set the threshold percentage $P$ of the model to 65\% and require that the traffic imbalance must last for a minimum time period before a lane-direction change can be performed. We define the aggressiveness of lane-direction changes in CLLA as the length of the period. When the period is short, the system can perform lane-direction changes at smaller intervals, and vice-versa. 

\textbf{Depth in CLLA}: This is the parameter $dp$ used in Algorithm~\ref{PDG Algorithm}. When the depth is larger, CLLA can explore more vertices in the PDG, which allows it to detect the impact of a lane-direction change on the road segments that are further away from the location of the change. 

\textbf{Lookup distance in CLLA}: This is the parameter $l$ used in Algorithm~\ref{PDG Algorithm}. It can affect the number of vertices and the number of edges in a PDG. With a higher lookup distance, the PDG needs to consider more road segments in the path of vehicles, which can help identify the impact of lane-direction changes at a longer distance but can increase the size of the graph at the same time.

\textbf{Update period in CLLA}: This parameter controls the frequency at which coordinating agents decide on lane-direction changes. CLLA is suitable for highly dynamic traffic environments. Hence the update period $\Delta t$ can be set to a low value. We vary the value of this parameter between 0.3 minute to 20 minutes with the default value set to 0.3 minute.  

\textbf{Update period in DLA}: This parameter affects the frequency at which DLA optimizes lane-direction changes. DLA decides on lane-direction changes based on the traffic demand that is collected within the update period $\Delta t$ prior to the optimization process. We vary the value of this parameter between 2.5 min to 20 min with the default value set to 10.

\section{Experimental Results}
We now present experimental results when comparing CLLA against the baseline algorithms in the first part, and present the sensitivity analysis to the parameter values of the algorithms in the second part. 

\subsection{Comparison against the baselines}

This experiment compares the performance of the four solutions, which are described in Section~\ref{sec-baseline}. We run a number of simulations in this experiment. For each simulation, we extract taxi trip information for one hour using the real taxi trip data from New York. Based on the real data, we generate traffic in the simulation. The experiment is done for two areas as shown in Figure~\ref{map1} and Figure~\ref{map2}. To simulate a larger variety of traffic scenarios, we also up-sample the trip data to generate more vehicles. We define an \emph{Up Sampled Factor}, which is the number of vehicles that are generated based on each taxi trip in the taxi data. 

For LLA and CLLA, the learning rate $\alpha$ is 0.001 and the discount factor used by Q-learning is 0.75. The parameter $minLoad$ of DLA is set to 100. For other parameters of the solutions, we use the default values as shown in Table~\ref{tbl-sensitivitySettings}. 

\textbf{Average travel time: }
Figure~\ref{NW1} and Figure~\ref{NW2} show the average travel time achieved with the four solutions. \textbf{CLLA} outperforms the other solutions in both simulation areas. We can observe that the average travel time of \textbf{LLA} and \textbf{CLLA} is significantly lower compared to the average travel time of \textbf{no-LA}, which shows the benefit of dynamic lane-direction changes. Although \textbf{DLA} achieves lower travel times than \textbf{no-LA}, it does not perform well compared to \textbf{CLLA} for both areas. \textbf{CLLA} performs consistently better than \textbf{LLA}, because \textbf{LLA} only makes lane-direction changes based on local traffic information without coordination.

We also test the performance of the solutions for a different scenario, where the traffic demand is static.  Vehicles are generated at a constant rate during a 30-minute period. Under this setting, the traffic is less dynamic than the previous scenario, where the traffic demand is based on real data. Figure~\ref{NW1:uniform demand} and Figure~\ref{NW2:uniform demand} show the average travel time achieved with the four solutions. Interestingly, \textbf{DLA} performs as good as \textbf{CLLA}. This is due to the fact that \textbf{DLA} optimizes traffic based on the estimated traffic demand. As the traffic demand is kept constant, the estimated demand can match the actual demand, resulting in the good performance of \textbf{DLA}. On the other hand, \textbf{CLLA} is developed for highly dynamic traffic environments. When the traffic is static, such as in this scenario, the advantage of the solution is limited. 

The results show that \textbf{DLA} can work well with static traffic but does not work well with highly dynamic traffic. \textbf{CLLA} on the other hand works well in both environments: substantially outperforming the baselines in dynamic environments, and matching the performance of DLA in static environments.

\begin{figure}
	\begin{subfigure}[b]{.25\textwidth}
		\centering
		\includegraphics[width=.88\linewidth, height=3.5cm]{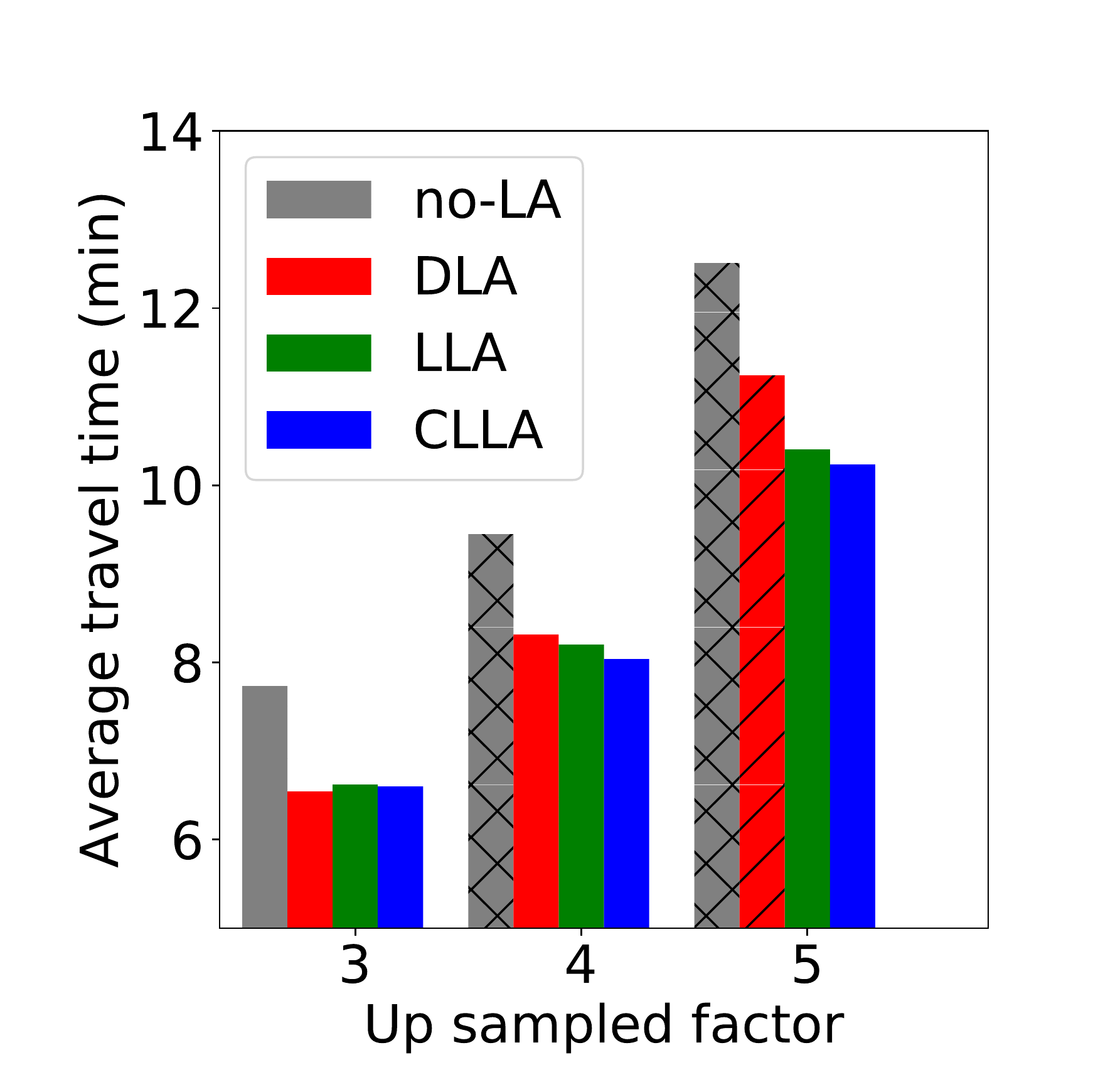}
		\caption{Long Island}
		\label{NW1}
	\end{subfigure}%
	\begin{subfigure}[b]{.25\textwidth}
		\centering
		\includegraphics[width=.88\linewidth, height=3.5cm]{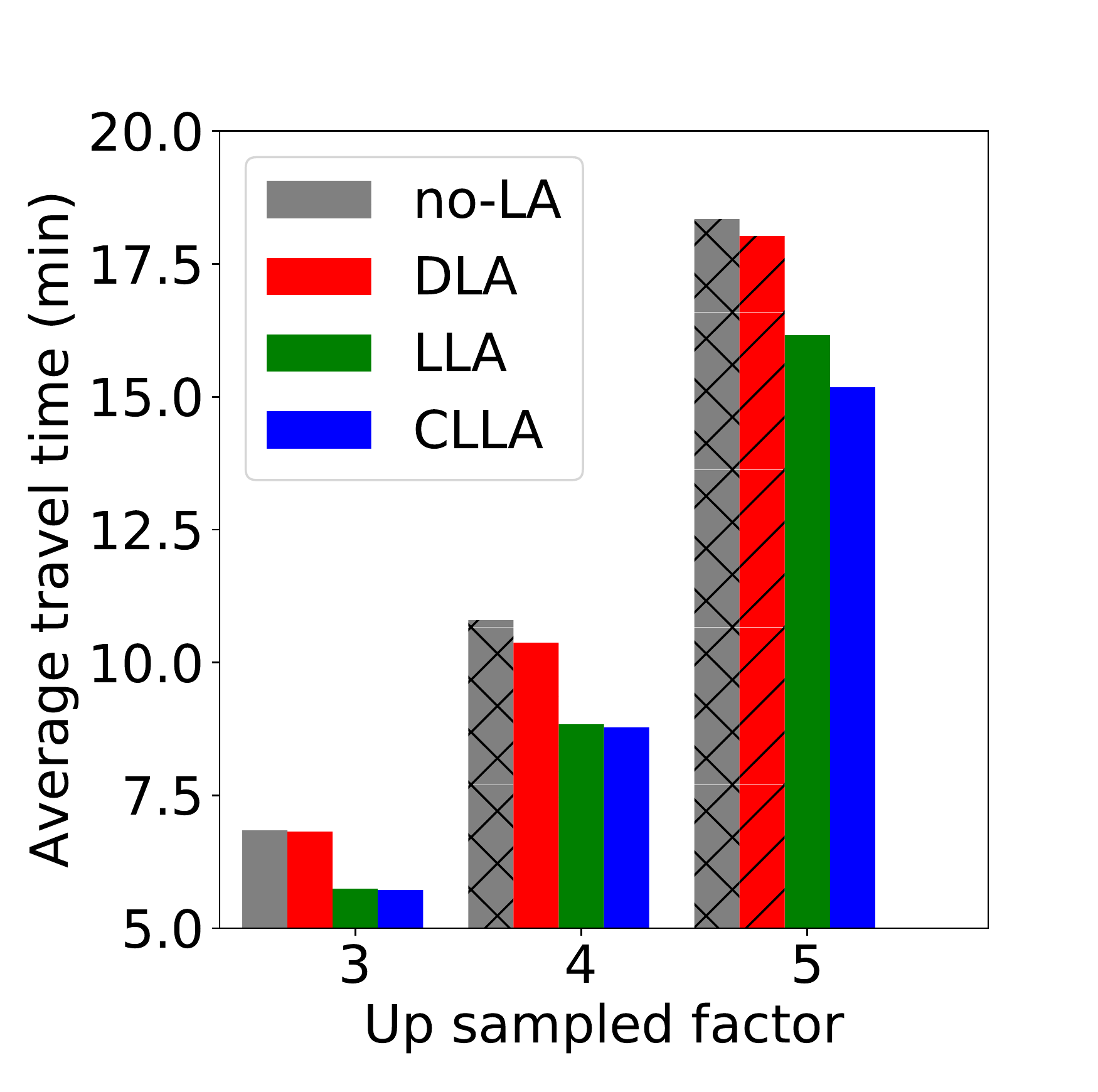}
		\caption{Midtown Manhattan}
		\label{NW2}
	\end{subfigure}
	\caption{Performance of four solutions with dynamic traffic}
	\vspace{-4mm}
\end{figure}

\begin{figure}
	\begin{subfigure}[b]{.25\textwidth}
		\centering
		\includegraphics[width=.88\linewidth, height=3.5cm]{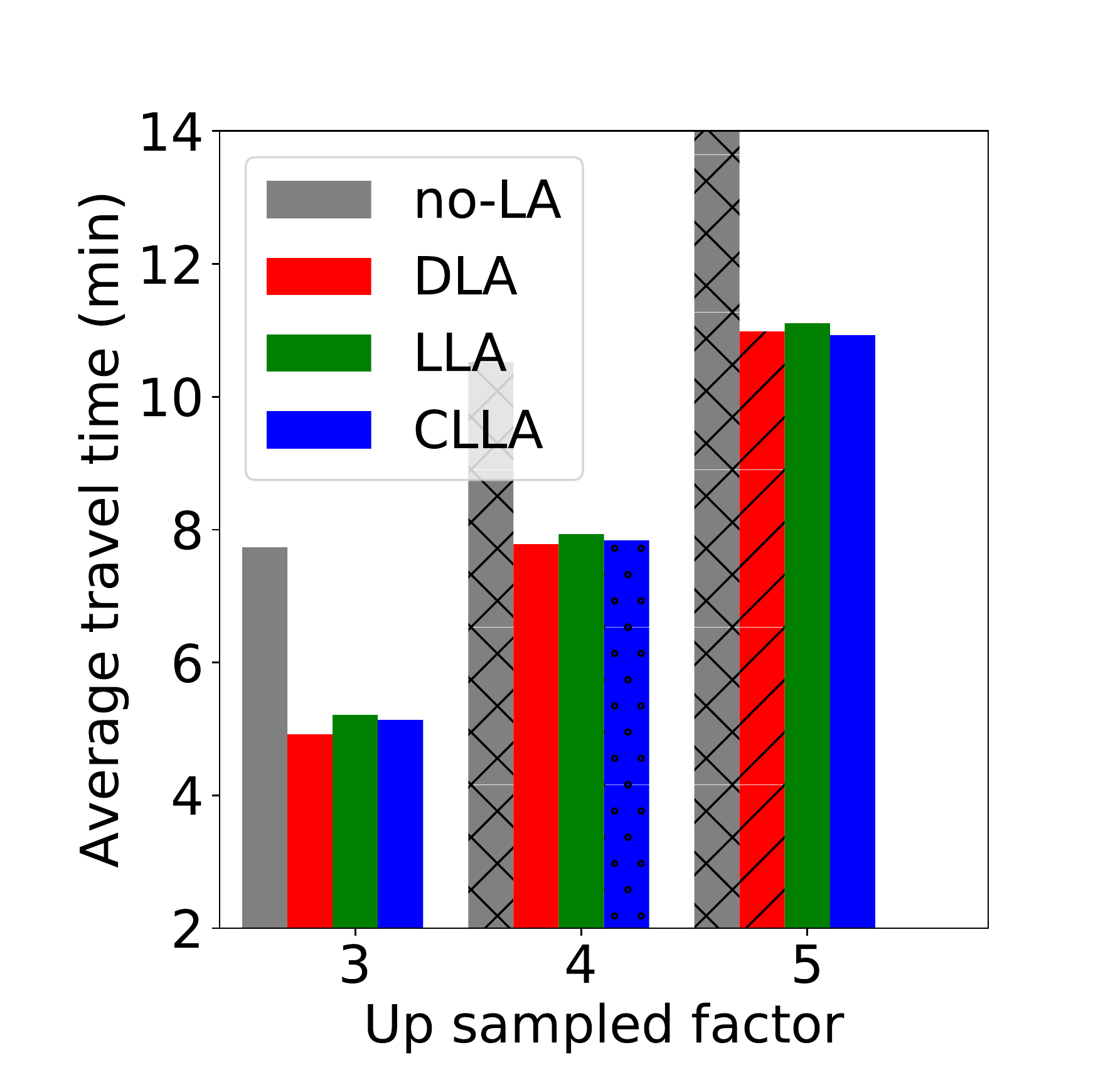}
		\caption{Long Island}
		\label{NW1:uniform demand}
		
	\end{subfigure}%
	\begin{subfigure}[b]{.25\textwidth}
		\centering
		\includegraphics[width=.88\linewidth, , height=3.5cm]{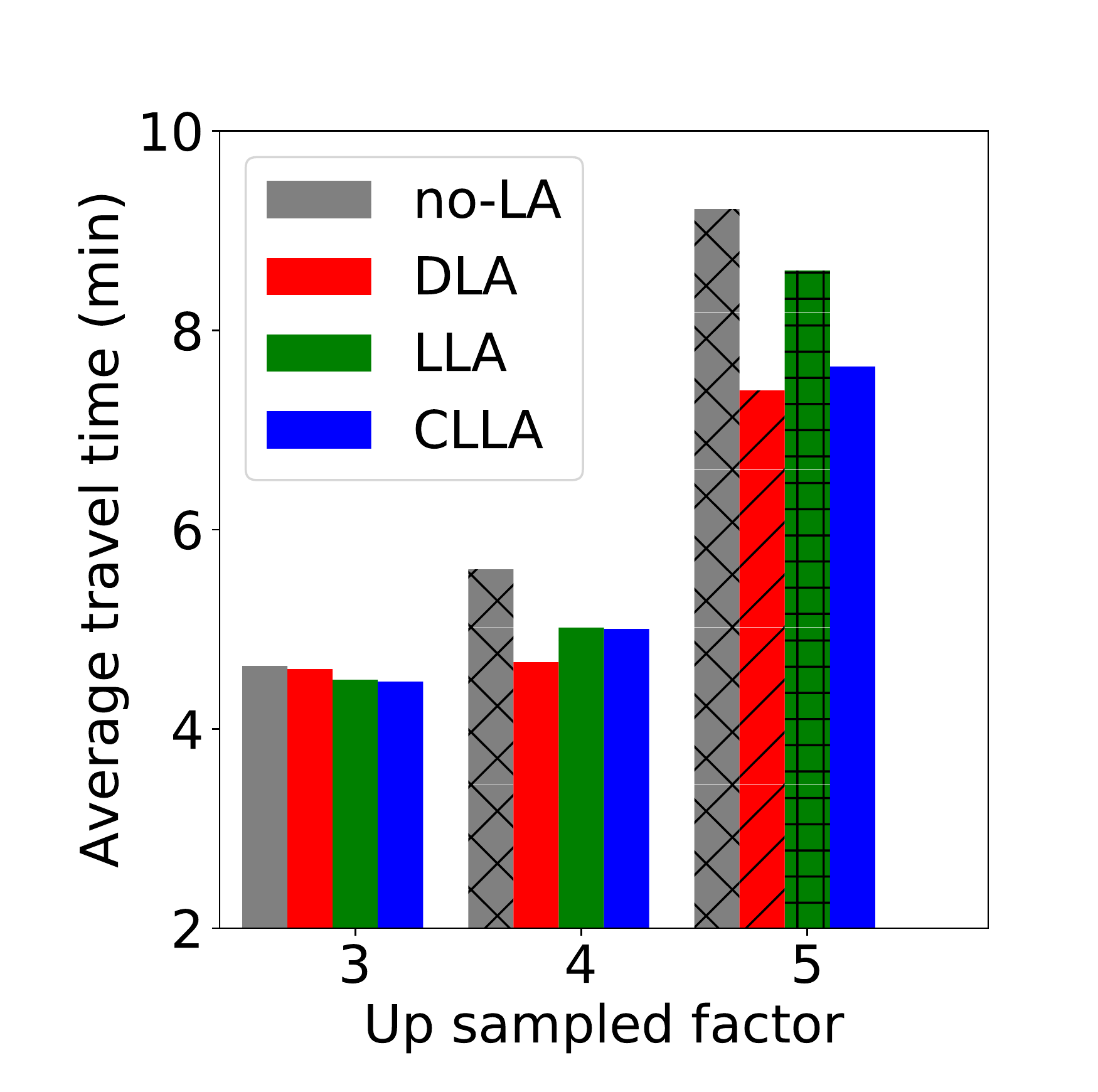}
		\caption{Midtown Manhattan}
		\label{NW2:uniform demand}
	\end{subfigure}
	\caption{Performance of four solutions with static traffic}
	
\end{figure}

\textbf{Deviation from free-flow travel time (DFFT)}: Table~\ref{DFFT_table} shows the percentage of vehicles whose travel time is 10 times or more than their free-flow travel time. The results show that \textbf{LLA} and \textbf{CLLA} are able to achieve a lower deviation from the free-flow travel time compared to \textbf{DLA}. 

\begin{table}[]
	\centering
	\begin{tabular}{|l|l|l|}
		\hline
		Solution & Long Island & Midtown Manhattan\\ \hline
		DLA       & 10.04\%                                                                    & 49.52\%                                                                     \\ \hline
		LLA        & 7.78\%                                                                     & 44.19\%                                                                     \\ \hline
		CLLA       & 7.77\%                                                                     & 46.12\%                                                                     \\ \hline
	\end{tabular}
	\caption{The percentage of vehicles with a DFFT of higher than 10}
	\label{DFFT_table}
	\vspace{-1mm}
\end{table}

\subsection{Parameter sensitivity testing}

For evaluating the effects of individual parameters, we run simulations in the area shown in Figure~\ref{map1}. Each simulation lasts for one hour, during which the traffic is generated based on the real taxi trip data from the area. Figure~\ref{CLLA params} shows the effects of four parameters of \textbf{CLLA}. Figure~\ref{update freq: DLA, CCLA} compares the effects of the update period between \textbf{DLA} and \textbf{CLLA}. 

Our result shows that the travel time increases when the cost of a lane-direction change increases (Figure~\ref{cost of lane change}). The result indicates that lane-direction changes may not be beneficial in all circumstances. When the cost of lane-direction changes is high, performing the changes can cause significant interruption to the traffic and negate the benefit of the changes. 

Figure~\ref{Lane change: Aggressiveness } shows how the aggressiveness of lane-direction changes can affect the travel time of vehicles. The result shows that a low level of aggressiveness and a high level of aggressiveness have a negative impact on travel times. When the level of aggressiveness is low, the lane-direction changes can only happen in large intervals. Hence the changes may not adapt to the dynamic change of traffic. When the level of aggressiveness is high, the system changes the direction of lanes at a high frequency, which can cause significant interruption to the traffic attributed to taking the time to clear the lanes during the changes. 

Our result shows that the best depth for traversing the PDG is 2 (Figure~\ref{depth}). When the depth changes from 1 to 2, we observe a decrease in travel time. However, when the depth is higher than 2, we do not observe a decrease of travel time. When the depth is higher, the system can identify the impact of a lane-direction change that are further away. However, the impact can become negligible if the lane-direction change is far away. This is the reason there is no improvement of travel time when the depth is higher than 2.

Figure~\ref{length} shows that a larger lookup distance can result in a lower average travel time. When the lookup distance increases, \textbf{CLLA} considers more road segments in a vehicle path when building the PDG. This helps identify the consequential lane-direction changes on the same path. The reduction in the average travel time becomes less significant when the lookup distance is higher than 2. This is because the impact of a lane-direction change reduces when the change is further away.

When the update period $\Delta t$ of \textbf{DLA} is below 5 minutes or beyond 15 minutes, it is less likely to get a good estimation of traffic demand, which can lead to a relatively high travel time (Figure~\ref{DLA update}). The average travel time is at its minimum when $\Delta t$ is set to 10 minutes. Different to \textbf{DLA}, the travel time achieved with \textbf{CLLA} grows slowly with the increase of $\Delta t$ until $\Delta t$ reaches beyond 15 minutes. The relatively steady performance of \textbf{CLLA} shows that the coordination between lane-direction changes can help mitigate traffic congestion for a certain period of time in the future. If minimizing the average travel time is of priority, one can set $\Delta t$ to a very low value, e.g., 5 minutes. If one needs to reduce the computation cost of the optimization while achieving a reasonably good travel time, the $\Delta t$ can be set to a larger value, e.g., 15 minutes. 

\begin{figure}[t]
	\begin{subfigure}[b]{.25\textwidth}
		\centering
		\includegraphics[width=.88\linewidth, height=2.5cm]{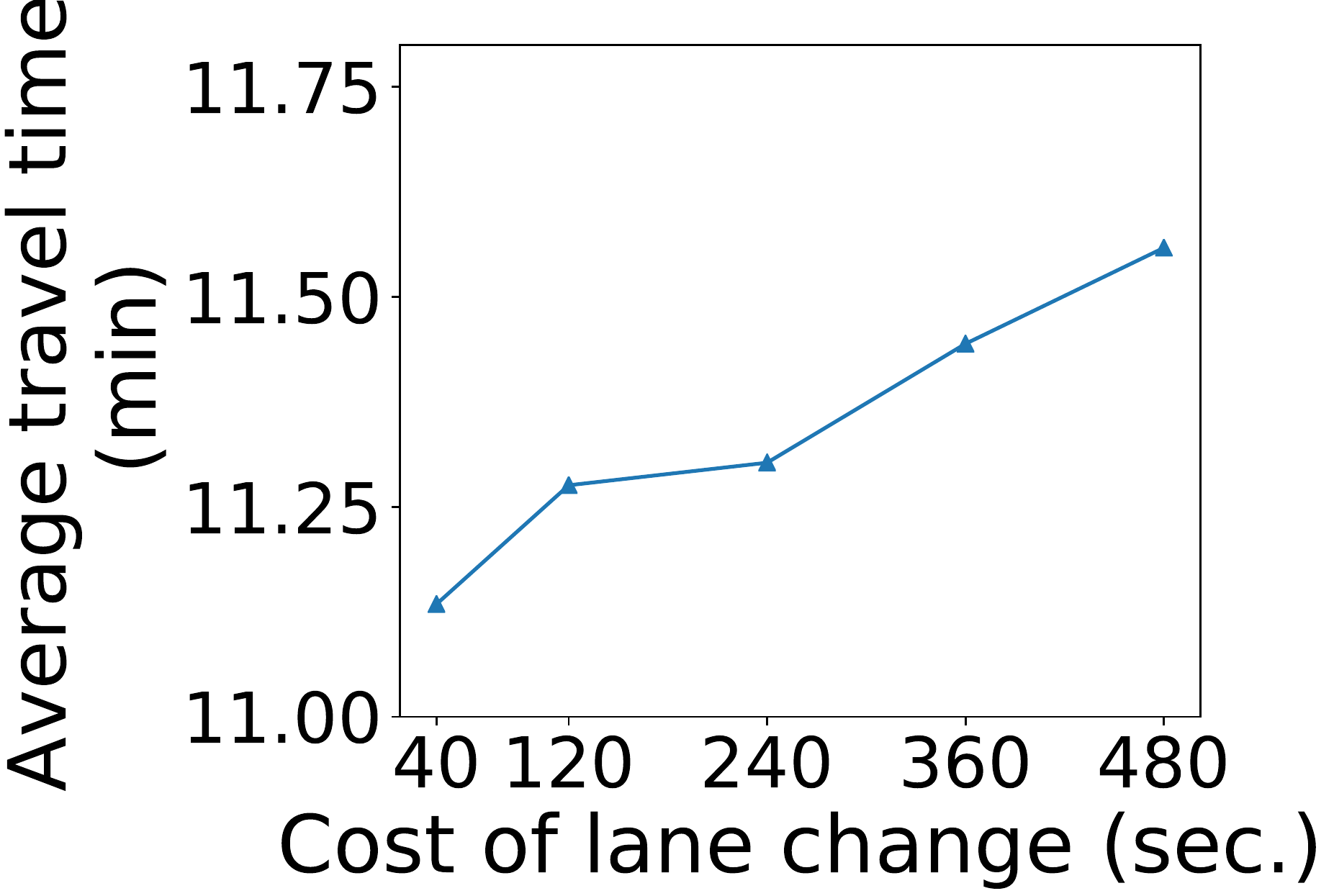}
		\caption{Cost of change}
		\label{cost of lane change}
	\end{subfigure}%
	\begin{subfigure}[b]{.25\textwidth}
		\centering
		\includegraphics[width=.88\linewidth, height=2.5cm]{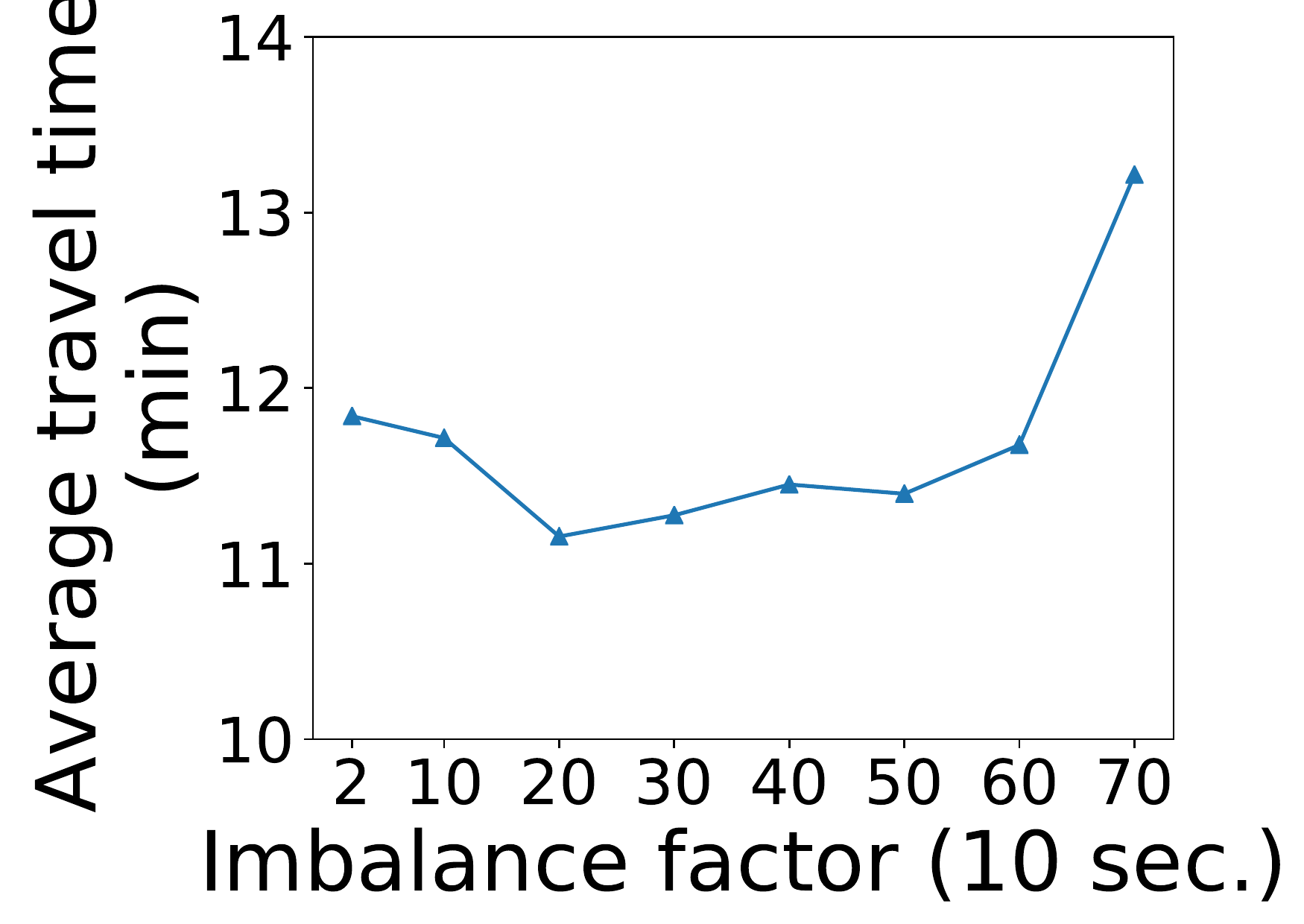}
		\caption{Aggressiveness of change}
		\label{Lane change: Aggressiveness }
	\end{subfigure}
	
	\begin{subfigure}[b]{.25\textwidth}
		\centering
		\includegraphics[width=.88\linewidth, height=2.5cm]{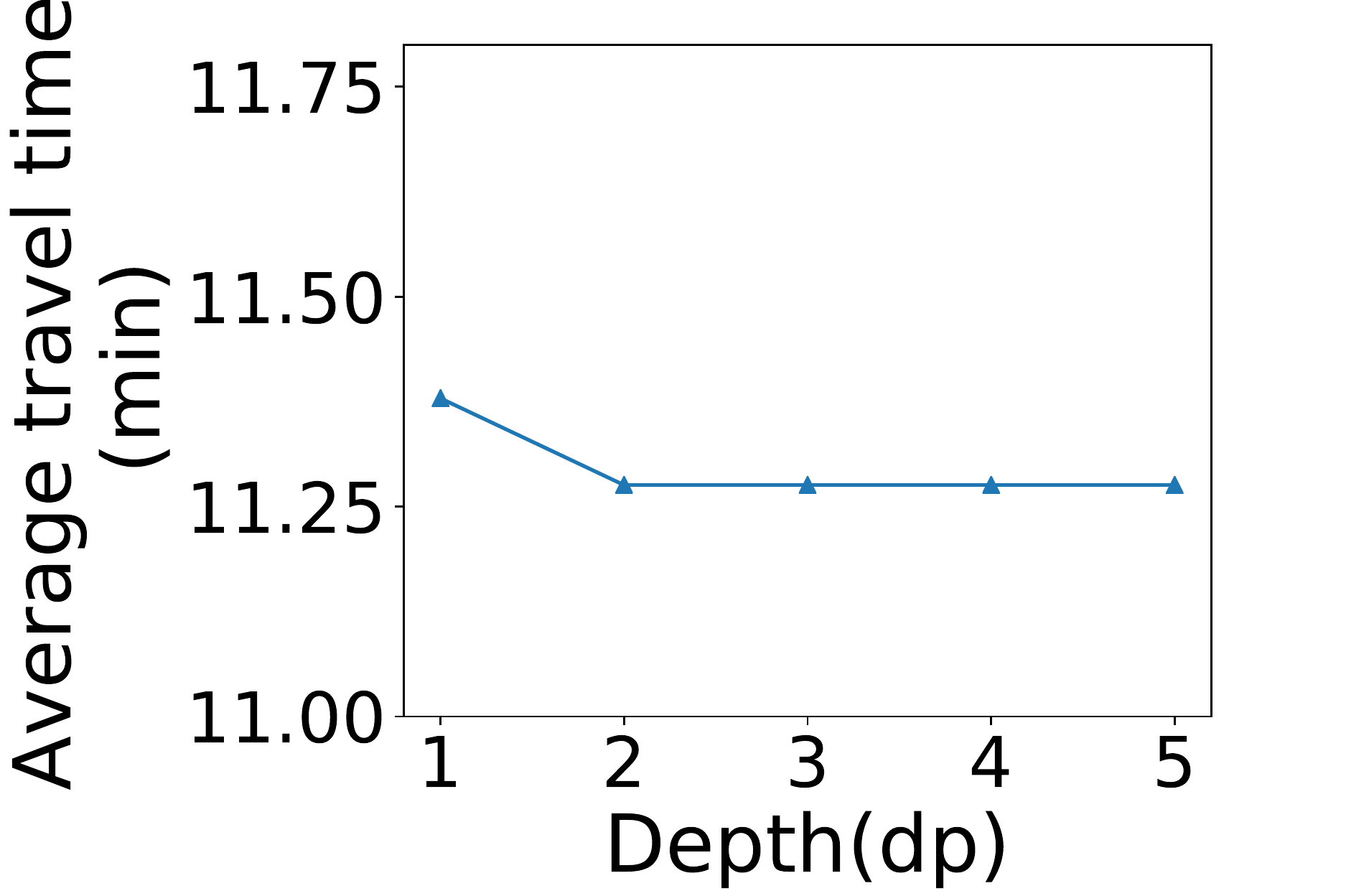}
		\caption{Depth ($dp$)}
		\label{depth}
	\end{subfigure}%
	\begin{subfigure}[b]{.25\textwidth}
		\centering
		\includegraphics[width=.88\linewidth, height=2.5cm]{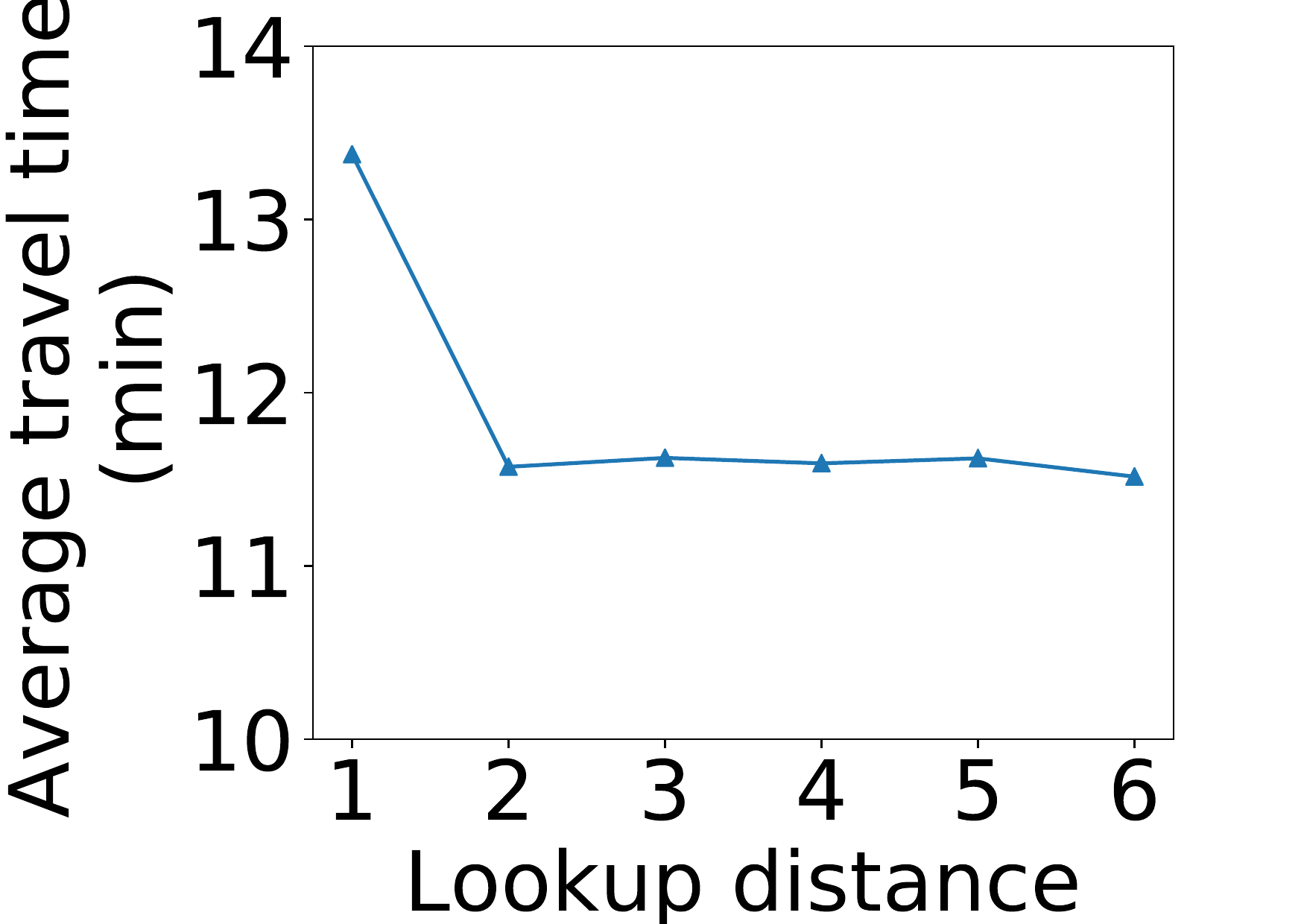}
		\caption{Lookup distance ($l$)}
		\label{length}
	\end{subfigure}
	\caption{Effects of four parameters of CLLA}
	\label{CLLA params}
	\vspace{-2mm}
\end{figure}

\begin{figure}
	\begin{subfigure}[b]{.24\textwidth}
		\centering
		\includegraphics[width=.88\linewidth, height=2.5cm]{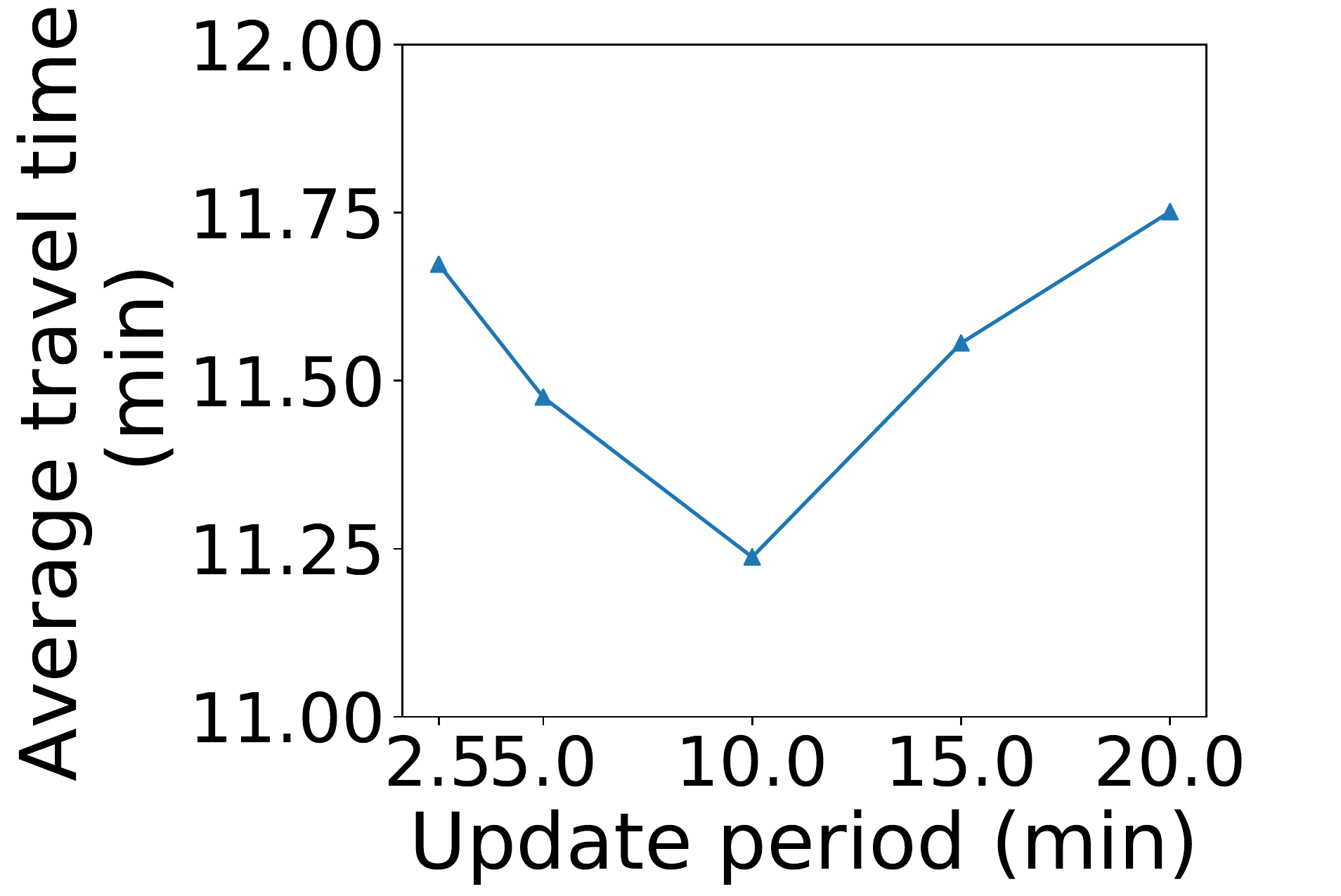}
		\caption{DLA update period}
		\label{DLA update}
	\end{subfigure}%
	\begin{subfigure}[b]{.26\textwidth}
		\centering
		\includegraphics[width=1.04\linewidth, height=2.5cm]{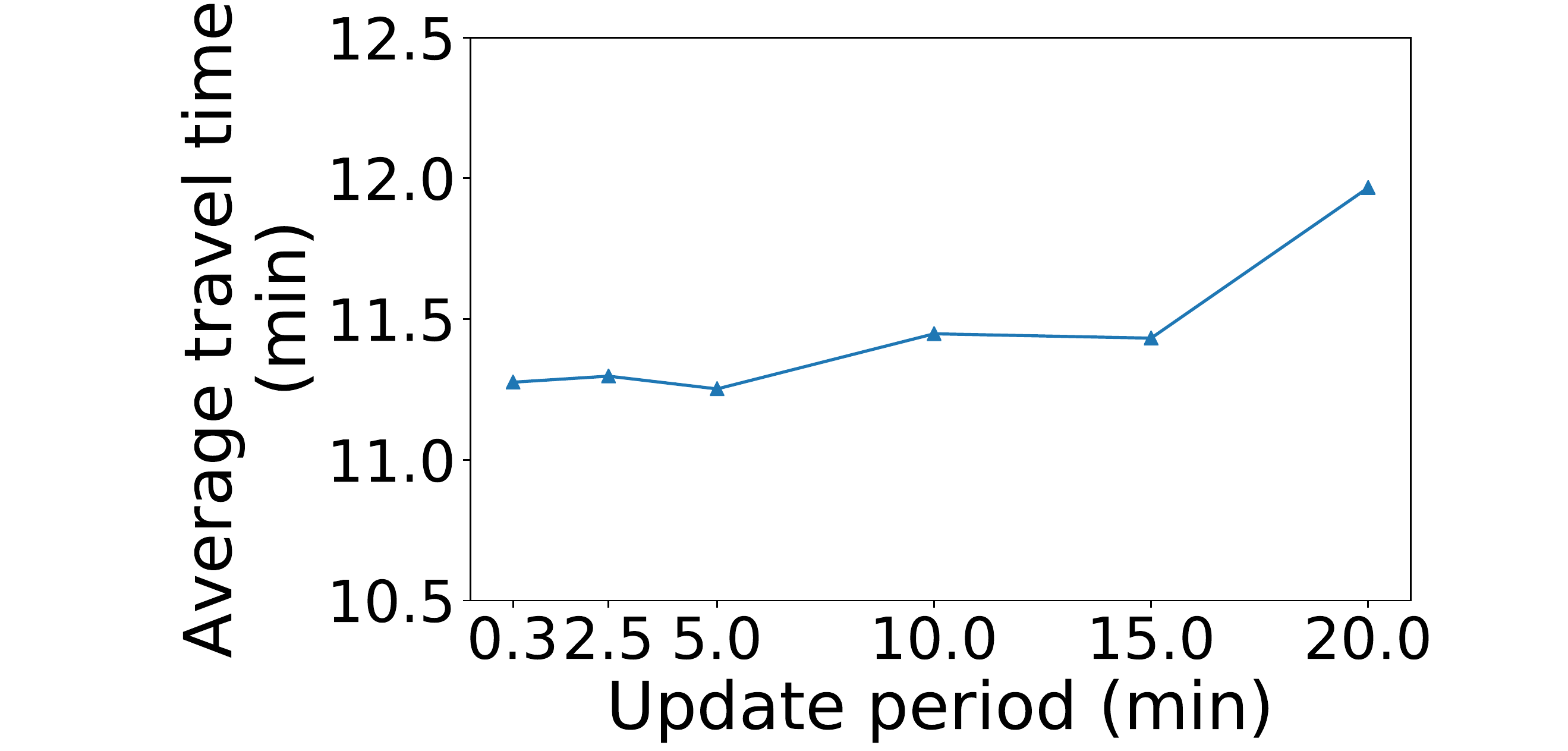}
		\caption{CLLA update period}
		\label{CLLA update}
	\end{subfigure}
	\caption{Effects of the update period of CLLA and DLA}
	\label{update freq: DLA, CCLA}
\end{figure}

\section{Conclusion}

We have shown that effective traffic optimization can be achieved with dynamic lane-direction configurations. Our proposed hierarchical multi-agent solution, CLLA, can help reduce travel time by combining machine learning and the global coordination of lane-direction changes. The proposed solution adapts to significant changes of traffic demand in a timely manner, making it a viable choice for realizing the potential of connected autonomous vehicles in traffic optimization. Compared to state-of-the-art solutions based on lane-direction configuration, CLLA runs more efficiently, and is scalable to large networks.

There are many directions one can investigate further. An interesting extension would be to incorporate dynamic traffic signals into the optimization process. Currently we assume that the connected autonomous vehicles follow the pre-determined path during their trip. An exciting direction for further research is to dynamically change vehicle routes in addition to the lane-direction changes. The dynamic change of speed limit of roads can also be included in an extension to CLLA.

\bibliographystyle{IEEEtran}
\bibliography{icde}

\appendix[Degree of a vertex in PDG]

\begin{lemma} \label{deg lemma}
   The maximum node density $\Delta(PDG)$ is independent of the underline road network $G(V,E)$ and number of vehicle paths. 
\end{lemma}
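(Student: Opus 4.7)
The plan is to bound the degree of any vertex $v \in V^{PDG}$ by a quantity that depends only on the lookup distance $l$ and the maximum junction (intersection) degree $D$ of the road network, and then observe that $D$ is a bounded constant in any realistic road network (typically $D \le 4$ or $5$), so that the bound is independent of $|V|$, $|E|$, and of the number of vehicle paths $k$.

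First I would fix an arbitrary vertex $v \in V^{PDG}$ corresponding to a road segment $s \in E$, and recall from the construction in Section~\ref{sec-coord} that (i) an out-edge $(v, v')$ exists only if some vehicle path passes through $s$ and then through the road segment $s'$ corresponding to $v'$ within the next $l$ segments, and (ii) duplicate occurrences of the same pair $(s, s')$ across multiple paths collapse into a single edge. Consequently the out-degree of $v$ is at most the number of distinct road segments reachable from $s$ in at most $l$ hops along \emph{any} simple walk in $G$, regardless of how many vehicles follow those walks. By a standard BFS argument, this number is at most $\sum_{i=1}^{l} D^i$, which depends only on $l$ and $D$. The same argument, applied to walks ending at $s$, gives the identical bound on the in-degree.

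Next I would argue that the number of paths $k$ drops out of the bound entirely: the PDG is a simple graph in the sense that each pair $(s, s')$ contributes at most one edge, so adding more vehicles with paths that reuse already-present segment pairs can never increase the degree of $v$. Similarly, the quantities $|V|$ and $|E|$ do not appear in the bound $\sum_{i=1}^{l} D^i$, because BFS from $s$ in $l$ hops cannot reach more than $D^l$-many neighbors irrespective of the size of the rest of the network.

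The mildly delicate step, and the one I would treat with the most care, is justifying that $D$ can be treated as a constant rather than as a quantity depending on $G$. I would handle this by explicitly stating $\Delta(PDG) \le 2\sum_{i=1}^{l} D^i$ as a function of $l$ and $D$ only, and then invoking the standard empirical fact that real road networks have a small, uniformly bounded intersection degree (a four-way cross being the practical worst case for urban grids such as the New York networks used in the experiments). Under this assumption, $D$ is an absolute constant, the right-hand side of the bound is therefore an absolute constant, and the lemma follows. No other step of the argument depends on the structure of $G$ or on $k$, so the conclusion holds as stated.
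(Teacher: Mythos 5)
Your proposal is correct and follows essentially the same argument as the paper: bound the degree of a PDG vertex by the number of road segments reachable within the lookup distance $l$, observe that this is at most a power of the intersection degree, and note that the intersection degree is a constant independent of network size and that duplicate segment pairs collapse to a single PDG edge so the number of paths drops out. Your version is marginally more careful (the $\sum_{i=1}^{l} D^i$ bound, the explicit in-degree case, and the explicit deduplication step), but it is the same route, not a different one.
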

\begin{proof} 
Let $v\in V^{PDG}$, maximum vertex degree $\Delta(PDG)$ can be found as follows. 

A degree of a vertex $v_{G}\in G(V,E)$ (road network) depends on the number of roads connected to an intersection. A special property of a road graph is that the degree of a node does not increase with the network size. Let there be $n$ number of roads on average, connected to one intersection in $G$. Then $deg(v_{G}) = n$.

Now let us take $v \in V^{PDG}$ is also $v \in E$, where $v$ is a road in the road network. Starting from $v$, within $l$ lookup distance, there can be maximum of $n^l$ roads. Since $n$ does not increase with the network size $n^l$ also does not increase with the network size.

Assuming the worse case, there can be paths from $v$ to each of these $n^l$ roads. Let $R$ be the set of roads in $n^l$. According to the definition of $PDG$, if there is a path between $v$ and $r \in R$, $\exists$ $e_{v,r} \in V^{PDG} \forall r \in R$. This means there are $n^l$ number of edges from $v$. Therefore $deg(v) = n^l$. Then the maximum node density $\Delta(PDG) = n^l$

Note that $\Delta(PDG)$ is independent of the size of $G(V,E)$ and number of paths. 
\end{proof}

\vspace{12pt}

\end{document}